\documentclass[journal]{IEEEtran}

\hyphenation{op-tical net-works semi-conduc-tor}
\usepackage{blindtext}
\usepackage{graphicx}
\usepackage[utf8]{inputenc}
\usepackage{array}
\usepackage{wrapfig}
\usepackage{multirow}
\usepackage{tabu}
\usepackage{algorithm, algpseudocode}
\usepackage{amsmath, amssymb}
\usepackage[utf8]{inputenc}
\usepackage[T1]{fontenc}
\usepackage{textcomp}
\usepackage{siunitx}
\usepackage{color}
\usepackage{physics}
\usepackage{cite}
\usepackage{hyperref}

\usepackage[dvipsnames]{xcolor}

% *** MATH PACKAGES ***
\usepackage{amsthm}
\usepackage{physics}

\newtheorem{theorem}{Theorem}[section]
\newtheorem{lemma}[theorem]{Lemma}

% correct bad hyphenation here
\hyphenation{op-tical net-works semi-conduc-tor}

\begin{document}
%----------------------------------------------------%
%                   Title
%----------------------------------------------------% 

\title{Identification of Stability Regions in Inverter-Based Microgrids}

\author{Andrey Gorbunov,~\IEEEmembership{Student Member,~IEEE}, Jimmy Chih-Hsien Peng,~\IEEEmembership{Senior Member,~IEEE},\\
Janusz W. Bialek,~\IEEEmembership{Fellow,~IEEE}, and Petr Vorobev,~\IEEEmembership{Member,~IEEE}%
%<-this % stops an unwanted space
%\thanks{ A. Gorbunov and J. C. -H. Peng are with the Department of Electrical \& Computer Engineering, National University of Singapore, Singapore (email: gorbunov@u.nus.edu, jpeng@nus.edu.sg)}
%\thanks{J.W. Bialek is with Skolkovo Institute of Science and Technology, Moscow, Russia, and with the School of Engineering, Newcastle University, Newcastle upon Tyne, U.K (email: j.bialek@skoltech.ru)}
%\thanks{P. Vorobev is with Skolkovo Institute of Science and Technology, Moscow, Russia. (email: p.vorobev@skoltech.ru)}
%\thanks{Work of P.Vorobev was supported by The Ministry of Education and Science of Russian Federation, Grant Agreement No 075-10-2021-067, Grant identification code 000000S707521QJX0002.}
}

% The paper headers
\markboth{}%
{Shell \MakeLowercase{\textit{et al.}}: Bare Demo of IEEEtran.cls for Journals}

\maketitle

\begin{abstract}
A new method for the stability assessment of inverter-based microgrids is presented in this paper. Directly determining stability boundaries by searching the multidimensional space of inverters' droop gains is a computationally prohibitive task. Instead, we build a certified stability region by utilizing a generalized Laplacian matrix eigenvalues, which are a measure of proximity to stability boundary. We establish an upper threshold for the eigenvalues that determines the stability boundary of the entire system and demonstrate that this value depends only on the network's R/X ratio but does not depend on the grid topology. We also provide a conservative upper threshold of the eigenvalues that are universal for any systems within a reasonable range of R/X ratios. We then construct approximate \emph{certified stability regions} representing convex sets in the multidimensional space of droop gains that could be utilized for gains optimization. We show how the certified stability region can be maximized by properly choosing droop gains, and we provide closed-form analytic expressions for the certified stability regions. The computational complexity of our method is almost independent of the number of inverters. The proposed methodology has been tested using IEEE 123 node test system with 10 inverters.
\end{abstract}

\begin{IEEEkeywords}
 droop controlled inverters, invert-based systems, microgrids, small-signal stability, stability assessment.
\end{IEEEkeywords}

\IEEEpeerreviewmaketitle
%----------------------------------------------------%
%               I N T R O D U C T I O N 
%----------------------------------------------------% 
\section{Introduction}
%\cite{DepartmentofEnergy2011}
\IEEEPARstart{P}{ower} electronics interfaced generation is becoming increasingly widespread in modern power systems, mainly due to the increase in the share of renewable energy sources. Historically driven by governmental policies, this process is accelerating with the reduction of prices for semiconductor devices, and it is now being forecasted that
up to 80\% of all the generated electricity in the world will go
through power electronics devices by 2030 \cite{DepartmentofEnergy2011}. Thus, the so-called zero-inertia power systems---AC electric systems with no synchronous machines---have attracted a lot of attention from both research and industrial communities in recent years. Grid-forming inverters are the core elements of such systems and are now thought to become the main components of future power systems \cite{matevosyan2019grid,DepartmentofEnergy2011}. Unsurprisingly, inverter-based microgrids have attracted considerabl
e attention from the research community over the last years, with studies focused on different inverter control techniques, modeling approaches, security assessment, etc. The literature on the topic is vast, and the interested reader can refer to a review in \cite{parhizi2015state}.

Droop-controlled inverters mimic the behavior of synchronous machines \cite{Pogaku2007}, and are expected to become the main building blocks for zero-inertia grids. Power sharing and voltage regulation capabilities, the possibility of parallel operation, and relatively simple and universal operating principles make these inverters a promising solution for future power electronics-dominated grids. However, already the early experiments \cite{barklund2008energy} revealed that small-signal stability could become a major issue for such systems. Moreover, despite the seeming similarity of droop-controlled inverters to synchronous machines, the secure regions of inverter control parameters can be very narrow and require careful tuning of droop coefficients to guarantee the stable operation of such systems. Moreover, further research also revealed that modeling approaches that were routinely employed for conventional power systems fail to perform well for microgrids \cite{mariani2014model,vorobev2017high,Nikolakakos2016}. 

While direct detailed modeling can be used to certify stability of inverter-based microgrids \cite{naderi2019interconnected}, the computation cost of such an approach grows with the size of the system, and it becomes very high already for moderate-size grids with several inverters. In the past few years, there has been a lot of research in utilizing reduced-order models for simple and reliable stability assessment. In \cite{rasheduzzaman2015reduced}, a method is proposed to compute the coefficients of a reduced-order model numerically, which is sufficiently accurate to detect possible instabilities. However, such a model provides no insight into what parameters of the system influence stability the most. In \cite{Nikolakakos2016,nikolakakos2017reduced}, reduced-order models of different levels of complexity were analyzed, and the system states, critical for stability assessment, were determined. It was shown that contrary to conventional power systems, simple approaches based on time-scale separation do not perform well for microgrids. Moreover, \cite{Nikolakakos2016} showed that instability in inverter-based systems is mainly driven by the so-called "critical clusters" -  groups of inverters with short interconnection lines, and the heuristic method was proposed to detect these critical clusters. Later, in \cite{vorobev2017high}, and \cite{Vorobev2017}, the influence of critical clusters on system stability was confirmed by analytic models, although approximate ones. In \cite{gorbunov2020identification} the notion of critical clusters was defined using the highest eigenvalues of the generalized Laplacian matrix that we also exploit in this paper.

It is clear from the above that even stability certification of a given system configuration of a microgrid represents a rather complex problem. In practice, however, it is desirable to know the region of system parameters, where stable operation is guaranteed. Thus, one needs to consider \emph{stability regions} in the space of system parameters. If solved directly, this problem involves explicit verification of the system stability for multiple points in this space. Thus, in \cite{wu2014small} such point by point approach is used to find stability regions in a two-dimensional space of inverter droop coefficients (the ratio between droop coefficients of two inverters in the system is assumed to be constant). Likewise, in \cite{shuai2019parameter} and \cite{lenz2017bifurcation} bifurcation analysis is used to establish stability boundaries in the spaces of different pairs of system parameters. However, such direct approaches are only computationally feasible for building stability regions in low dimensional parameter space (like $2$-d space in the mentioned papers). Computational difficulty grows quickly (exponentially) with the number of parameters considered, which is a result of the \emph{curse of dimensionality}. Thus, constructing stability regions by point-by-point direct numerical assessment becomes infeasible for systems with multiple inverters.

To address this problem, in the present manuscript, we develop a methodology that allows determining approximate stability boundaries in the space of frequency and voltage droop gains for systems with an arbitrary number of inverters. The method does not require checking the stability of individual operating points within the space, and therefore it is not prone to the curse of dimensionality. The main original contributions of the paper are as follows:

\begin{enumerate}
    
    \item In Section \ref{sec:eigen} we generalize previously developed stability assessment \cite{gorbunov2020identification} method to account for arbitrary $R/X$ and droop coefficients ratios to provide conservative but easy to determine stability boundaries. The method is based on the analysis of the spectrum of the so-called generalized network Laplacian matrix, which in most cases has much fewer dimensions than the original state matrix.
    
    \item In Section \ref{sec:stability_regions}, we apply the developed methodology for constructing approximate \emph{stability regions} in the multidimensional space of droop gains for a system with multiple inverters. Namely, we provide \emph{closed form} analytic expressions for approximate stability regions, which represents a convex set (in the space of droop gains). This makes it very convenient for use in any optimization application.
    
    \item In Section \ref{sec:stability_regions} we show how the volume of the certified stability region can be maximized by properly choosing droop gains of the generalized Laplacian matrix. We also demonstrate that the computation cost of our method is almost independent of the number of inverters and network complexity.
\end{enumerate}

The methodology has been tested on a realistic model of the IEEE $123$ node system with $10$ inverters.

%----------------------------------------------------%
%               S E C T I O N  II
%----------------------------------------------------% 

\section{Modeling Approach} \label{sec:modeling}
In this section, we first recapitulate the dynamic model of inverter-based microgrids used for stability analysis. We adopt the $5$-th order electromagnetic model (EM) representing the grid-forming inverter as a droop-controlled AC voltage source, reported in many papers \cite{vorobev2017high,mariani2014model,machado2020network} as a suitable model for the stability assessment of inverter-based microgrids. The model consists of states associated with dynamics of each inverter (i.e., angles $\theta_i$, frequency $\omega_i$, and voltage $V_i$), each line (i.e., current components in \textit{dq} frame $I^{ij}_d$ and $I^{ij}_q$), and each load (i.e., current injections in $dq$ frame $[I_d]_i$ and $[I_q]_i$). Note that power, voltage, current, and impedances are given in per-unit. 
Denoting a set of lines as $\mathcal{E}$, a set of all nodes as $\mathcal{V}$, a set of buses with inverters as $\mathcal{V}_O$,  a set of virtual buses (nodes with zero current injections) as $\mathcal{V}_{virt}$, and a set of nodes with (passive) loads as $\mathcal{V}_L$, the dynamic model is represented by the following equations: 
\begin{subequations} \label{eq:dyn_nonlinear}
    \begin{align}
        &\dot{\theta_i} = \omega_i - \omega_0 \ , \ i \in \mathcal{V}_O,\\
        &\tau \dot{\omega_i} = \omega_{set} - \omega_i - \omega_0 m_i P_i \ , \ i \in \mathcal{V}_O,\\
        &\tau \dot{V_i} = V_{set} - V_i - n_i Q_i\ , \ i \in \mathcal{V}_O, \\
        &[I_d]_i = [I_q]_i = 0 \ , \ i \in \mathcal{V}_{virt} , \\
        &L_i \dot{[I_d]_{i}} = V_i \cos{\theta_i} - R_i [I_d]_i + \omega_0 L_i [I_q]_i \ , i\in\mathcal{V}_L\\
        &L_i \dot{[I_q]_i} = V_i \sin{\theta_i} - R_i [I_q]_i - \omega_0 L_i [I_d]_i \ , i\in\mathcal{V}_L , \\
        &L^{ij} \dot{I^{ij}_d} = V_i \cos{\theta_i}-V_j \cos{\theta_j} - R^{ij} I^{ij}_d + \omega_0 L^{ij} I^{ij}_q \ ,\\
        &L^{ij} \dot{I^{ij}_q} = V_i \sin{\theta_i}-V_j \sin{\theta_j} - R^{ij} I^{ij}_q - \omega_0 L^{ij} I^{ij}_d \ .
    \end{align}
\end{subequations}
Here, related to every inverter, $m_i$ and $n_i$ are frequency and voltage droop gains, respectively, and $\tau = \frac{1}{\omega_c}$ is the time constant of the power measurement low-pass filter \cite{Pogaku2007} with a cut-off frequency of $\omega_c$. $L^{ij}$ and $R^{ij}$ are inductance and resistance of the line connecting nodes $i$ and $j$, respectively, $L_i$ and $R_i$ are the inductance and the resistance of the loads. For brevity, the subscript of a variable represents the node index (e.g., $\theta_i, \ i \in \mathcal{V}$), and the superscript refers to those of an edge (e.g., $I_q^{ij}, \ (ij) \in \mathcal{E}$). Here, we use the constant impedance load representation. However, as was shown (both numerically and experimentally) in the work \cite{bottrell2013dynamic}, load dynamics have a little effect on the damping of low-frequency modes associated with inverter droop-controllers, which are the main focus of the present manuscript.

Next, the linear approximation of \eqref{eq:dyn_nonlinear} is formulated. The state variables are derived from the stationary operating point, such that $\delta \theta_i = \theta_i^{\text{actual}} - \theta_i^0$, where $\theta_i^{\text{actual}}$ is the actual angle value, and $\theta_i^0$ is the operational value. Here, we use the fact that angular differences between inverters are typically small in inverter-based microgrids. Hence, the operating values can be assumed as $\theta_i^0 \approx 0 \ \forall i\in \mathcal{V}$ and $V_i^0 \approx V_{set} = 1 \ p.u.$ \cite{mariani2014model,vorobev2017high}:

\begin{subequations} \label{eq:dyn_model}
    \begin{align}
        &\dot{\delta \theta_i} = \delta \omega_i \ , \ i \in \mathcal{V}_O,\\
        &\tau \dot{\delta \omega_i} = - \omega_i - \omega_0 m_i \delta P_i \  ,\ i \in \mathcal{V}_O,\\
        &\tau \delta\dot{ V_i} = -\delta V_i - n_i \delta Q_i\ , \ i \in \mathcal{V}_O,\\
        &[\delta I_d]_i = [\delta I_q]_i = 0 \ , \ i \in \mathcal{V}_{virt} , \\
        &L_i \dot{[\delta I_d]_{i}} = \delta V_i - R_i [\delta I_d]_i + \omega_0 L_i [\delta I_q]_i \ , i\in\mathcal{V}_L\\
        &L_i \dot{[\delta I_q]_i} = \delta \theta_i - R_i [\delta I_q]_i - \omega_0 L_i [\delta I_d]_i \ , i\in\mathcal{V}_L , \\
        &L^{ij} \dot{\delta I^{ij}_d} = \delta V_i - \delta V_j - R^{ij} \delta I^{ij}_d + \omega_0 L^{ij} \delta I^{ij}_q \ \label{eq:dyn_Id}, \\
        &L^{ij} \dot{\delta I^{ij}_q} = \delta \theta_i - \delta \theta_j - R^{ij} \delta I^{ij}_q - \omega_0 L^{ij} \delta I^{ij}_d \label{eq:dyn_Iq}  \  .
    \end{align}
\end{subequations}
To simplify the notations, the state variables associated with deviations, e.g., $\delta \theta_i, \delta V_i$, etc., will be denoted merely as $\theta_i, V_i, \cdots$ from the hereafter. 

Usually, the number of lines is much higher than the number of buses, so the total number of equations in the system \eqref{eq:dyn_model} can become big even for moderate-size grids. The standard approach, used for conventional power systems, utilizes the well-pronounced time-scale separation between electro-mechanical and electromagnetic phenomena so that that line dynamics can be neglected (i.e.,, $\dot{I_d}, \dot{I_q} \approx 0$) when considering generator angles dynamics. This allows to introduce nodal currents that are related to nodal voltages via the admittance matrix $\pmb{I} = Y \pmb{V}$ (i.e., by a set of algebraic relations). However, for inverter-based microgrids, explicit accounting for line dynamics is crucial for small-signal stability analysis, and derivative terms in equations \eqref{eq:dyn_Id} and \eqref{eq:dyn_Iq} cannot be neglected \cite{vorobev2017high,Nikolakakos2016}. Nevertheless, a nodal representation of \eqref{eq:dyn_model} and, more importantly, Kron reduction is still possible under uniform $R/X$ assumption \cite{tucci2016plug}, \cite{caliskan2012kron}, and will be discussed in the following subsection.  
 
\subsection{Nodal representation of the network dynamics}
 
Let us denote the incidence matrix of the power system network as $\nabla^T$ with a size of $|\mathcal{V}|\times |\mathcal{E}|$, such that $\nabla^T_{ij} = -1$ if the $j$th line leaves the $i$th node, or $\nabla^T_{ij} = 1$ if the $j$th line enters the $i$th node we remind, that the nodes can be inverter nodes, load nodes, or virtual nodes.
The electromagnetic dynamics of the lines from equations \eqref{eq:dyn_Id} and \eqref{eq:dyn_Iq}  can then be represented by the following vector equations:
\begin{subequations} \label{eq:Ohms_law}
    \begin{align}
        &\pmb{\dot{\mathcal{I}}_d} = \omega_0 X^{-1}\nabla \pmb{V}_a - \omega_0 \mathcal{P} \pmb{\mathcal{I}_d} + \omega_0 \pmb{\mathcal{I}_q} \ ,\\
        &\pmb{\dot{\mathcal{I}}_q} = \omega_0 X^{-1}\nabla \pmb{\theta}_a - \omega_0 \mathcal{P}  \pmb{\mathcal{I}_q} - \omega_0 \pmb{\mathcal{I}_d} \ ,
    \end{align}
\end{subequations}
where $X = \text{diag}(\cdots,X^{ij},\cdots)$ is a matrix with line reactances on its diagonal; $\mathcal{P} =\text{ diag}(\cdots,\rho^{ij},\cdots)$ is a matrix with $R/X$ ratios on its diagonal; $\pmb{V}_a$ and $\pmb{\theta}_a$ are vectors consisting of voltages and phase angles at each bus in $\mathcal{V}$; $\pmb{\mathcal{I}_d}$ and $\pmb{\mathcal{I}_q}$ are vectors of currents associated with each line in $ \mathcal{E}$. One should notice that the vectors $\pmb{V}_a, \pmb{\theta}_a$ and vectors $\pmb{\mathcal{I}_d}, \pmb{\mathcal{I}_q}$ have different dimensions: $|\mathcal{V}|$ and $|\mathcal{E}|$ accordingly. Let us multiply \eqref{eq:Ohms_law} by $\nabla^T$, and introduce the denotations $\{\pmb{I_d}\}_a = \nabla^T \pmb{\mathcal{I}_d}, \ \{\pmb{I_q}\}_a = \nabla^T \pmb{\mathcal{I}_q}$ for nodal currents, such that: 
\begin{subequations} \label{eq:EM_dynamics}
    \begin{align}
        &\{\pmb{\dot{I}_d}\}_a = \omega_0 \nabla^T X^{-1}\nabla \pmb{V}_a - \omega_0 \nabla^T \mathcal{P}  \pmb{\mathcal{I}_d} + \omega_0 \{\pmb{I_q}\}_a \ ,\\
        &\{\pmb{\dot{I}_q}\}_a = \omega_0 \nabla^T X^{-1}\nabla \{\pmb{\theta}\}_a - \omega_0 \nabla^T \mathcal{P}  \pmb{\mathcal{I}_q} - \omega_0 \{\pmb{I_d}\}_a \ .
    \end{align}
\end{subequations}
Equations \eqref{eq:EM_dynamics} are almost in the desired form with the vectors $\pmb{V}_a$, $\pmb{\theta}_a$ and $\{\pmb{I_d}\}_a, \{\pmb{I_q}\}_a$ being nodal, except for the presence of the term $\nabla^T \mathcal{P}  \pmb{\mathcal{I}}_d$. 

 Let us consider a useful case when all the lines have the same R/X ratio, $\mathcal{P} = \rho \mathbf{1}$ (homogeneous $\rho$). In this case, $\nabla^T \mathcal{P}  \pmb{\mathcal{I}}_d = \nabla^T \rho \pmb{\mathcal{I}}_d = \rho \{\pmb{I_d}\}_a$ and the nodal representation is \cite{tucci2016plug}, \cite{caliskan2012kron}:
\begin{subequations} \label{eq:EM_dynamics_homogen_full}
    \begin{align}
        &\{\pmb{\dot{I}_d}\}_a = \omega_0 (1+\rho^2)B_a \pmb{V}_a - \omega_0 \rho \{\pmb{I_d}\}_a + \omega_0 \{\pmb{I_q}\}_a \ ,\\
        &\{\pmb{\dot{I}_q}\}_a = \omega_0 (1+\rho^2)B_a \pmb{\theta}_a - \omega_0 \rho  \{\pmb{I_q}\}_a - \omega_0 \{\pmb{I_d}\}_a ,
    \end{align}
\end{subequations}
where $B_a = -\Im{Y}$ is the nodal susceptance matrix of all nodes in $\mathcal{V}$.

Apart from reducing the number of equations in \eqref{eq:Ohms_law} for a system with more lines than nodes, another advantage of \eqref{eq:EM_dynamics_homogen_full} is the fact that Kron reduction procedure can be performed over it, which eliminates nodes with zero injections $\{\pmb{I_q}\}_0=\{\pmb{I_q}\}_0 = 0$ as follows,
\begin{subequations} \label{eq:EM_dynamics_homogen}
    \begin{align}
        &\{\pmb{\dot{I}_d}\}_1 = \omega_0 (1+\rho^2)B_1 \pmb{V}_1 - \omega_0 \rho \{\pmb{I_d}\}_1 + \omega_0 \{\pmb{I_q}\}_1 \ ,\\
        &\{\pmb{\dot{I}_q}\}_1 = \omega_0 (1+\rho^2)B_1 \pmb{\theta}_1 - \omega_0 \rho  \{\pmb{I_q}\}_1 - \omega_0 \{\pmb{I_d}\}_1 ,
    \end{align}
\end{subequations}
where the susceptance matrix $B_1$ and vectors $\pmb{V}_1, \pmb{\theta}_1, \{\pmb{I_d}\}_1, \{\pmb{I_q}\}_1$ correspond to nodes with eliminated virtual buses. The derived nodal representation is used in the next subsection to get the state-space model.

\subsection{The state-space model with homogeneous $\rho$} \label{subsec:same_rx}

To represent the dynamic model in the state-space form, we first express inverters' real and reactive power in terms of their voltage and current. Again, assuming $V_i^0 \approx  1$ p.u. and $\theta_i^0 \approx 0$, the relationship between nodal power injections $\mathbf{P}, \mathbf{Q}$ and nodal currents $\pmb{I}_d, \pmb{I}_q$ and the inverter buses can be expressed as follows:

\begin{equation}
    \begin{pmatrix}
        \mathbf{P} \\
        \mathbf{Q} 
    \end{pmatrix} = 
    \begin{bmatrix}
        \mathbf{1} & 0\\
        0 & -\mathbf{1}\\
    \end{bmatrix}
    \begin{pmatrix}
        \pmb{I}_d\\
        \pmb{I}_q
    \end{pmatrix} .
\end{equation} 
Next, using the fact that for microgrids, the load impedance is much higher than the network impedances (see, e.g., \cite{vorobev2017high,Pogaku2007}), one can show that load buses can be excluded similarly to the virtual ones (see detailed discussion and derivation in the Appendix \ref{app:elimination_loads}). Subsequently, using \eqref{eq:EM_dynamics_homogen} the following state-space representation of the electromagnetic (EM) $5^{th}$ order model of a microgrid with the homogeneous $\rho$ is obtained:

\begin{equation} \label{eq:5th_model}
    \begin{pmatrix}
        \pmb{\dot{\theta}}\\
        \tau\pmb{\dot{\omega}}\\
        \tau\pmb{\dot{V}}\\
        \tau_0\pmb{\dot{I}}_d\\
        \tau_0\pmb{\dot{I}}_q
    \end{pmatrix} =
    \begin{bmatrix}
        0 & \mathbf{1} & 0 & 0 & 0 \\
        0 & -\mathbf{1} & 0 & -\omega_0 M & 0 \\
        0 & 0 & -\mathbf{1} & 0 & N\\
        0 & 0 & \mathcal{B} & -\rho \mathbf{1} & \mathbf{1} \\
        \mathcal{B} & 0 & 0 & -\mathbf{1} & -\rho \mathbf{1}
    \end{bmatrix}
    \begin{pmatrix}
        \pmb{\theta}\\
        \pmb{\omega}\\
        \pmb{V}\\
        \pmb{I}_d\\
        \pmb{I}_q
    \end{pmatrix} \ ,
\end{equation}
where $M = \text{diag}(m_1, \cdots, m_v), N = \text{diag}(n_1, \cdots, n_v)$ are diagonal matrices of droop gains for each inverter ($v = |\mathcal{V}_O|$ is the number of inverters into the system), $\mathcal{B} = (1+\rho^2) B$ and $B$ is the susceptance matrix of the Kron-reduced system, and all vectors $\pmb{V}, \pmb{\theta}, \pmb{\omega}, \pmb{I_d}, \pmb{I_q}$ also correspond to the inverters nodes such that each sub-matrix in \eqref{eq:5th_model} has a size of $v\times v$. Note that \eqref{eq:5th_model} does not include load admittance, i.e., the subsequent theoretical analysis assumes the unloaded system (see Appendix \ref{app:elimination_loads} for more details). We note, however, that in subsequent numerical simulations, we use the system with loads to verify and validate our result. 
%----------------------------------------------------%
%               S E C T I O N  III
%----------------------------------------------------% 
\section{Decomposition Into Two-bus Equivalents}\label{sec:eigen}

This section provides a theoretical foundation for our method of identification of stability regions. The fundamental results, previously reported by us in \cite{gorbunov2020identification} (i.e., Theorem \ref{theorem:one} and partially Theorem \ref{theorem_addition_line} are also included here for completeness.

The model in \eqref{eq:5th_model} can be written in the state-space form, i.e., $\dot{\pmb{x}}= A\pmb x$, such that the stability analysis of the system \eqref{eq:dyn_model} can be evaluated by the eigenvalue analysis of the matrix $A$. In this case, the state matrix in \eqref{eq:5th_model} exhibits a specific structure---a five-by-five block matrix where each $v\times v$ sub-matrix is symmetric---a property that we will exploit.

To derive the stability criteria in a compact form, let us assume that the frequency and voltage power droop coefficients ratio is uniform across the system, i.e., in matrix terms $M = k N$. In this case, the dynamic model \eqref{eq:5th_model} can be expressed in terms of $\pmb{\theta}$ in the Laplace domain in the following compact way \cite{gorbunov2020identification},
\begin{equation} \label{eq:theta_repr}
    [\tau k f(s) s I + g(s) (k + \tau s) M \mathcal{B} + (M \mathcal{B})^2] \pmb{\theta} = 0 \ ,
\end{equation}
where $g(s) = (1 + \tau s)$, $f(s) = g^2(s) [(\rho + \frac{s}{\omega_0})^2 + 1]$. Subsequently, the following theorem is formulated \cite[Theorem~III.1]{gorbunov2020identification}, which forms the basis for our stability regions identification method: 
\begin{theorem} \label{theorem:one}
    The eigenvalues $\lambda$ of \eqref{eq:5th_model}, under proportional droops $M = kN$ assumption, are related to the eigenvalues $\{\mu_i, \ i = 1,\cdots, v\}$ of the generalized Laplacian matrix $C = M\mathcal{B}$ by the following algebraic equations:
    \begin{equation} \label{eq:two_bus_polynomial}
        \tau kf(\lambda) + g(\lambda)(k+\tau \lambda) \mu_i + \mu_i^2 = 0 .
    \end{equation}
    
\end{theorem}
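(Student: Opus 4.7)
The plan is to reduce the $5v$-dimensional state-space system in \eqref{eq:5th_model} to a single polynomial equation in $\pmb{\theta}$ by successive elimination of the remaining state variables in the Laplace domain, and then decouple the resulting matrix equation along the eigenspaces of $C = M\mathcal{B}$.

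First I would apply the Laplace transform to \eqref{eq:5th_model}. The first two block rows combine into $\pmb{I}_d = -\frac{s\, g(s)}{\omega_0}\, M^{-1}\pmb{\theta}$; the third row, using $N = M/k$ and the power relation $\pmb{Q} = -\pmb{I}_q$, yields $\pmb{V} = \frac{M}{k\, g(s)}\, \pmb{I}_q$. Writing $\alpha(s) = \rho + s/\omega_0$, the fourth and fifth block rows form a $2\times 2$ linear system relating $(\pmb{I}_d,\pmb{I}_q)$ to $(\mathcal{B}\pmb{V},\mathcal{B}\pmb{\theta})$ whose inversion gives $\pmb{I}_d = (\alpha^2+1)^{-1}(\alpha\mathcal{B}\pmb{V} + \mathcal{B}\pmb{\theta})$ and an analogous expression for $\pmb{I}_q$. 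Substituting the identities for $\pmb{I}_d$ and $\pmb{V}$ from the droop rows into those from the line rows and clearing denominators makes every product of $M$ and $\mathcal{B}$ collapse into the combination $C = M\mathcal{B}$; eliminating $\pmb{V}$ between the two resulting equations produces exactly the matrix identity \eqref{eq:theta_repr}. The factor $\alpha^2+1 = (\rho + s/\omega_0)^2 + 1$ from the $2\times 2$ inversion combines with the two copies of $g(s)$ from the droop dynamics to form $f(s)$.

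Finally I would diagonalise $C$. Although $C = M\mathcal{B}$ is not itself symmetric, $M$ is a positive diagonal matrix and $\mathcal{B}$ is symmetric positive semidefinite, so $C$ is similar via $M^{1/2}$ to the symmetric matrix $M^{1/2}\mathcal{B}M^{1/2}$ and therefore admits a basis of real eigenvectors with real eigenvalues $\mu_i$. Expanding $\pmb{\theta}$ in this basis, the matrix polynomial in \eqref{eq:theta_repr} collapses on each eigenvector into a scalar polynomial in which $C$ is replaced by $\mu_i$; this is precisely \eqref{eq:two_bus_polynomial}. A nonzero $\pmb{\theta}$ exists in a given eigendirection exactly when $s = \lambda$ satisfies this scalar polynomial for the corresponding $\mu_i$, yielding the claimed correspondence between system eigenvalues and eigenvalues of $C$.

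The main technical obstacle is the bookkeeping in the elimination step: one has to verify that $M$ and $\mathcal{B}$ always appear together as $C$, with no stray factors of $M$ alone surviving in any term. This is exactly what the proportional-droop assumption $M = kN$ is there to guarantee---without it, a second independent matrix would survive elimination (the voltage loop would carry its own copy of $M$) and no scalar reduction of the form \eqref{eq:two_bus_polynomial} would be available.
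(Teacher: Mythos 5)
Your proposal is correct and follows essentially the same route as the paper: it derives the Laplace-domain reduction to the matrix polynomial \eqref{eq:theta_repr} by eliminating $\pmb{\omega}$, $\pmb{V}$, $\pmb{I}_d$, $\pmb{I}_q$, and then decouples it along the (real) eigenbasis of $C = M\mathcal{B}$ obtained from the similarity to $M^{1/2}\mathcal{B}M^{1/2}$, which is precisely the argument the paper invokes (deferring the elimination details to \cite{gorbunov2020identification}). Your closing remark on the role of $M = kN$ in ensuring that only the single matrix $C$ survives the elimination correctly identifies why the proportional-droop assumption is needed.
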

Thereby, the stability analysis of the system in \eqref{eq:5th_model} reduces to the analysis of polynomials in $\lambda$ from \eqref{eq:two_bus_polynomial} for all values of $\mu_i$. It could be shown that each polynomial in \eqref{eq:two_bus_polynomial} (i.e., fifth-order polynomial in $\lambda$ with a specific $\mu_i$)
coincides with the characteristic polynomial of an inverter versus an infinite bus system with the following parameters: $m^{eq}=kn^{eq}, R^{eq}=\rho X^{eq}$ and $\mu_i = \frac{m^{eq}}{X^{eq}}$. Thus, the initial microgrid can be effectively split into separate two-bus equivalents utilizing the eigenvalues of the generalized Laplacian matrix $C = M\mathcal{B}$. Fig. \ref{fig:kundur_decoupling} illustrates such a split for a two-area four-inverter system that is analogous to the Kundur's two-area system \cite{klein1991fundamental}. Fig. \ref{fig:kundur_eigplot_load_var} compares the dominant eigenvalues of two-bus equivalents \eqref{eq:two_bus_polynomial} and of the full model \eqref{eq:dyn_model}. The simulation code has been published online \footnote{\href{https://github.com/goriand/Identification-of-Stability-Regions-in-Inverter-Based-Microgrids}{https://github.com/goriand/Identification-of-Stability-Regions-in-Inverter-Based-Microgrids}} As the equation \eqref{eq:two_bus_polynomial} has been derived assuming a no-loaded system, for the validation over the system \eqref{eq:dyn_model} we have varied the loads in the range of $0.5-2$ p.u. to check if it makes much difference. The two-bus equivalents \eqref{eq:two_bus_polynomial} correspond to the following colors: green for Inter-Area mode, blue for Area I, and red for Area II, while purple dots show $100$ samples of eigenvalues of the full model \eqref{eq:dyn_model} with random load variations. One can observe that the eigenvalues of two-bus equivalents are very close to the eigenvalues of \eqref{eq:dyn_model} and the load variation has very little effect on the eigenvalues which confirms the derivations in Appendix \ref{app:elimination_loads}.

\begin{figure}
    \centering
    \includegraphics[width = 0.30\textwidth]{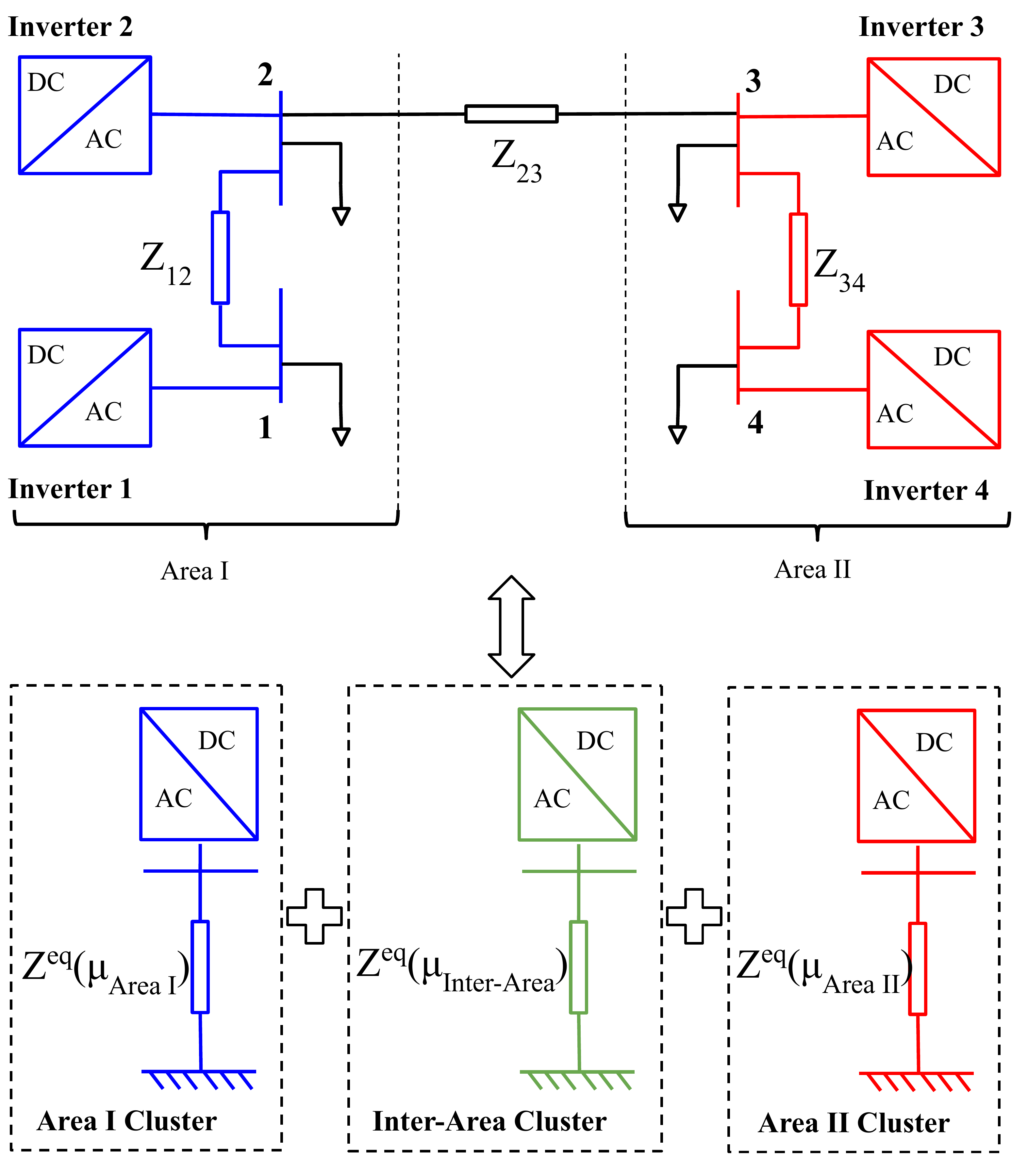}
    \caption{Illustration of decomposition of a system into a set of clusters - equivalent two-bus systems.}
    \label{fig:kundur_decoupling}
\end{figure}

\begin{figure}
    \centering
    \includegraphics[width=0.35\textwidth]{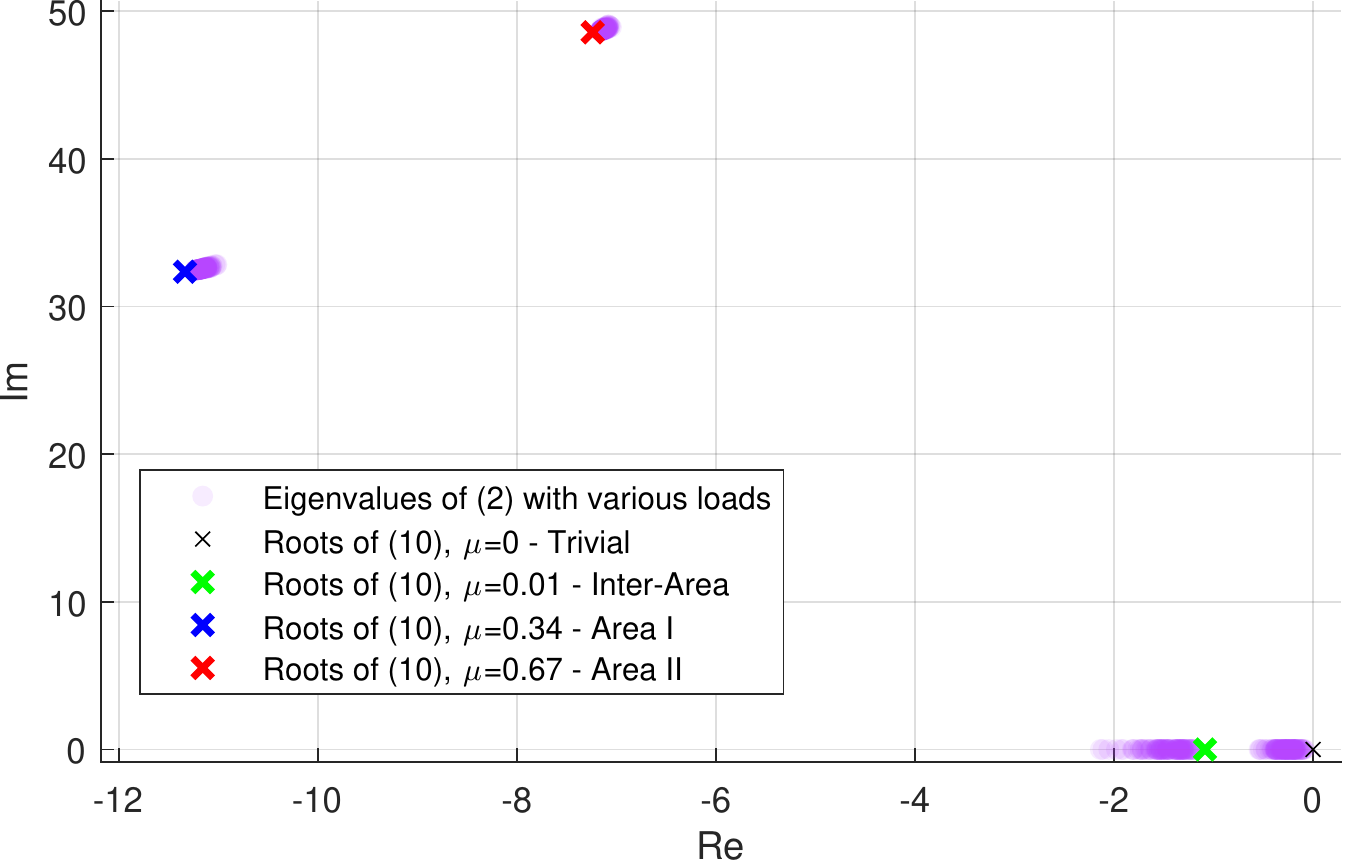}
    \caption{Eigenvalue plot for the two-area system of Fig. \ref{fig:kundur_decoupling} with varying load impedances. Only the dominant modes of the upper-half plane are shown.}
    \label{fig:kundur_eigplot_load_var}
\end{figure}

Each equivalent two-bus system in Fig. \ref{fig:kundur_decoupling} corresponds to one value of $\mu_i$. Thus, the initial grid's stability assessment is reduced to a set of equivalent two-bus systems.
An important property of the system in \eqref{eq:two_bus_polynomial} is that the value of $\mu_i$ quantifies the system stability, i.e.,, the higher $\mu_i$, the less stable the system is \cite{gorbunov2020identification}. For the case of two-bus equivalent systems, this can be explained by the fact that such system is less stable for higher values of $1/X^{eq}$ and droop gains $m_i$, as demonstrated in \cite{Pogaku2007}, \cite{vorobev2017high}. Therefore, a two-bus system's stability margin is also less for higher values of $\mu_i=m^{eq}/X^{eq}$. Thus, there exist an upper boundary $\mu_{cr}$ such that if for an initial system all $\mu_i < \mu_{cr}$, then the system is stable, and vice versa---if there are instances where $\mu_i > \mu_{cr}, \ i=1,\cdots,u$, then the system has $u$ unstable equivalent two-bus systems \cite{gorbunov2020identification}.

It is noteworthy that the critical value $\mu_{cr}$ is determined only by \eqref{eq:two_bus_polynomial} and is unique for a whole family of systems with particular uniform  $\rho=R/X$ ratio and droop gain ratios $k$. Therefore, $\mu_{cr}$ does not depend on a given system's topology. The value of $\mu_{cr}$ can be found numerically for every pair of $\rho$ and $k$. Thus, one can say that a certain value of $\mu_{cr}$ corresponds to a class of networks. The map of $\mu_{cr}$ as a function of $\rho$ for different values of $k$ in a practical range of $\rho \in [0.4,5]$ and $k \in [0.3;5]$ is illustrated in Fig. \ref{fig:mu_cr_map_proj} using the values $\tau = 1/10\pi, \omega_0 = 100\pi$. 

\begin{figure}
    \centering
    \includegraphics[width=0.4\textwidth]{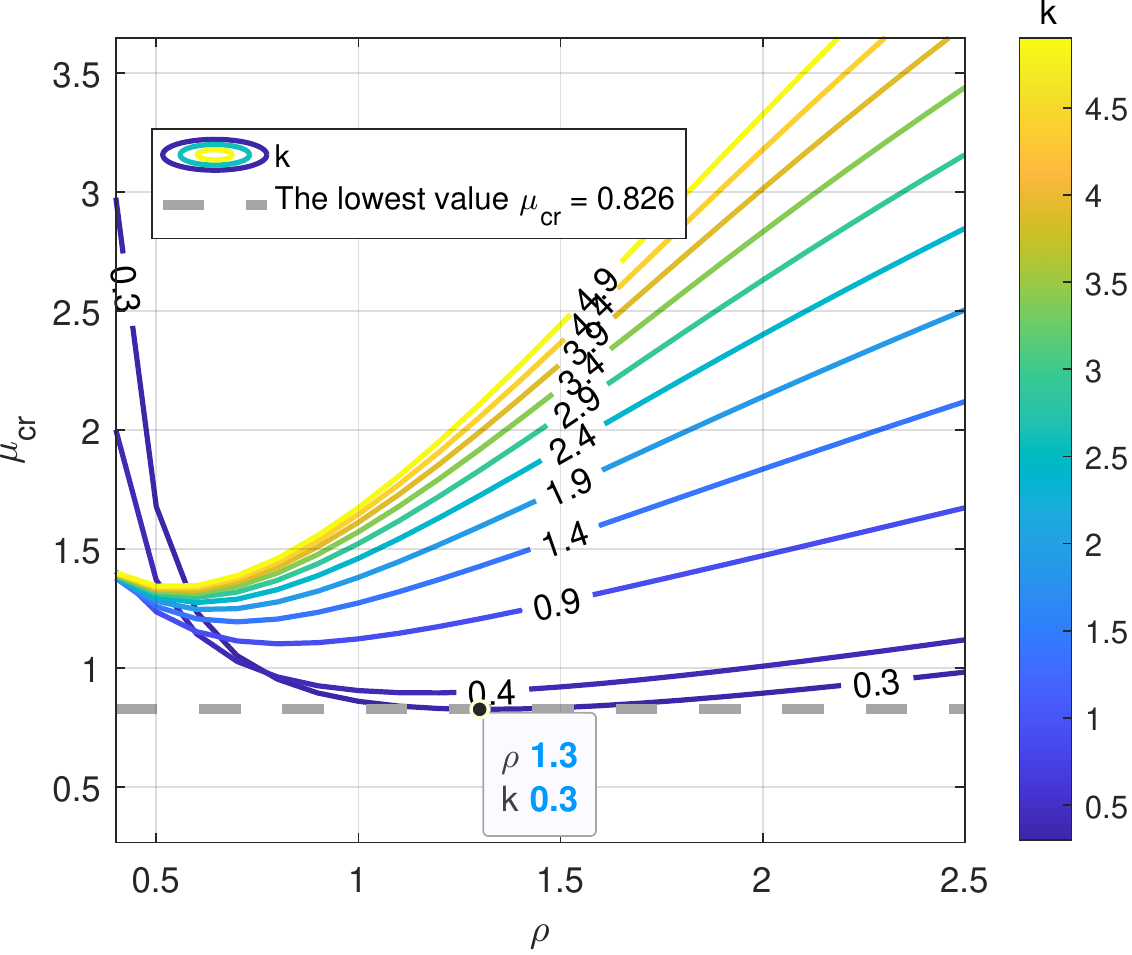}
    \caption{
    Upper threshold $\mu_{cr}$ with various parameters $\rho$ and $k$. The \textit{worst-case} (i.e., the smallest upper bound $\mu_{cr}$ in both $k$ and $\rho$) corresponds to $k=0.3, \rho=1.3$, where $\mu_{cr,min} = 0.826$.}
    \label{fig:mu_cr_map_proj}
\end{figure}

Now we will prove that the addition of any line and the increase of droop gain leads to an increase of $\mu_i$ in a system with multiple inverters. This property is a generalization of the observation for a two-bus system \cite{Pogaku2007,vorobev2017high}.

%----------------------------------------------------%
%               S E C T I O N  VI
%----------------------------------------------------% 

\subsection{Properties of the eigenvalues of $C$} \label{sec:sensitivity}

\begin{figure}
    \centering
    \includegraphics[width=0.35\textwidth]{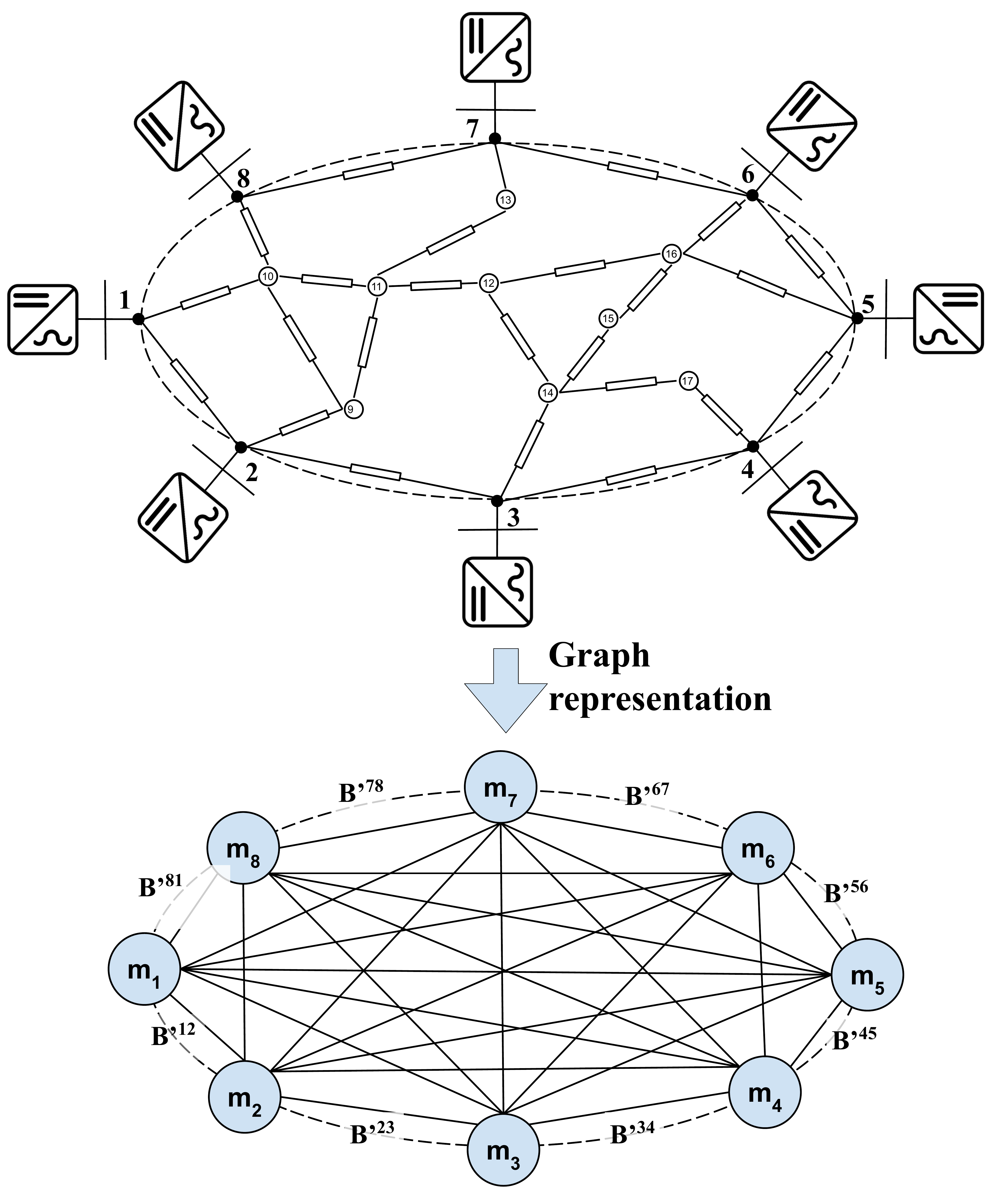}
    \caption{Representation of an inverter-based microgrid by a graph with $m_i$ as node weights and $\mathcal{B}^e$ as branch weights obtained after Kron-reduction (labels for some lines are omitted).}
    \label{fig:graph_representation}
\end{figure}

Matrix $C$ could be considered as the generalized Laplacian matrix \cite{godsil2001laplacian} for the Kron-reduced network graph augmented by node weights equal to droop gains $m_i, \ i=1,\cdots,l$ as illustrated in Fig. \ref{fig:graph_representation}. We note that all the eigenvalues of $C$ are real and  non-negative. Although $C$ is not symmetric, it can be made symmetric by a similarity transform  $M^{-1/2}C M^{1/2} = M^{1/2} \mathcal{B} M^{1/2}$.
Since $M$ is a diagonal matrix, calculating $M^{1/2}$ is trivial and proves that the eigenvalues of $C$ are real. Moreover, both $M^{1/2}$ and $\mathcal{B}$ matrices are positive (semi)definite, so is the product of $M^{1/2} \mathcal{B} M^{1/2}$, which proves that the eigenvalues of $C$ are non-negative. As the matrix $C$ involves dimensionless values of $1/X^{ij}$ (in p.u.) and $m_k$ (in \%), its eigenvalues $\mu_i$ are also dimensionless.  

Having proved that the eigenvalues of $C$ are non-negative, we can formulate the following theorem, which establishes important stability properties of inverter-based microgrids:
\begin{theorem}\label{theorem_addition_line}
    The addition of any new line, as well as the increase of any existing line susceptance $\mathcal{B}^e = \frac{1}{X^e}$, or the increase of any inverter's droop gain $m_k, k = 1, \cdots, v$ in the system could only increase eigenvalues $\mu_i, \ i=1,\cdots, v$.
\end{theorem}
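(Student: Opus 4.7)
The plan is to reduce all three monotonicity claims to a single generalized Rayleigh quotient and then exploit the fact that each perturbation (increasing $m_k$, increasing $\mathcal{B}^e$, or adding a line) moves either the numerator up or the denominator down in the positive semidefinite (PSD) order. First I would invoke the similarity transform already noted in the paragraph preceding the theorem: $\mu_i$ are the eigenvalues of the symmetric PSD matrix $M^{1/2}\mathcal{B}M^{1/2}$. Courant--Fischer then gives
\begin{equation*}
\mu_i \;=\; \max_{\dim V = i}\; \min_{x\in V\setminus\{0\}} \frac{x^{T} M^{1/2}\mathcal{B}M^{1/2} x}{x^{T} x}.
\end{equation*}
Substituting $y=M^{1/2}x$ (a bijection on $\mathbb{R}^v$, and hence on $i$-dimensional subspaces) turns this into a generalized Rayleigh quotient for the pencil $(\mathcal{B},M^{-1})$:
\begin{equation*}
\mu_i \;=\; \max_{\dim W = i}\; \min_{y\in W\setminus\{0\}} \frac{y^{T} \mathcal{B}\, y}{y^{T} M^{-1} y}.
\end{equation*}
Everything else is monotonicity bookkeeping on this formula.

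For the droop-gain claim, I would note that increasing any $m_k$ while holding the rest fixed produces an $M'\succeq M$; because both are positive diagonal, $M^{-1}-M'^{-1}$ is a non-negative diagonal matrix and therefore PSD, i.e., $M'^{-1}\preceq M^{-1}$. The denominator $y^{T}M^{-1}y$ then weakly decreases for every $y$, the ratio weakly increases for every $y$ and every subspace $W$, so $\mu_i$ can only increase. For the susceptance claim on a line already present in the Kron-reduced graph, increasing $\mathcal{B}^e=1/X^e$ just scales up the (PSD, rank-one) edge Laplacian it contributes to $\mathcal{B}$, and for a line not previously present it adds a fresh PSD edge Laplacian; in either case the new $\mathcal{B}'\succeq \mathcal{B}$, so the numerator weakly grows, and $\mu_i$ again weakly grows.

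The genuinely delicate case is when the added or strengthened line touches a virtual bus or load bus, i.e., a node that was eliminated in forming $B$ from the ambient susceptance matrix $B_{\text{full}}$ by Kron reduction (Schur complement with respect to the eliminated block $B_{22}$). Here I would use the variational characterization of the Schur complement for PSD matrices,
\begin{equation*}
x^{T}(B_{\text{full}}/B_{22})\,x \;=\; \min_{z}\, \begin{pmatrix} x \\ z \end{pmatrix}^{\!T} B_{\text{full}} \begin{pmatrix} x \\ z \end{pmatrix},
\end{equation*}
so that a PSD increment to $B_{\text{full}}$ increases the inner quadratic form pointwise in $(x,z)$ and hence increases its minimum over $z$ pointwise in $x$. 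This gives $B' \succeq B$, and therefore $\mathcal{B}'=(1+\rho^2)B' \succeq \mathcal{B}$, reducing this subcase to the previous one. The combined conclusion is that under any of the three perturbations in the theorem, $\mu_i$ is a weakly increasing function of the perturbation parameter, for every $i$.

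The main obstacle is the Schur-complement step for lines touching eliminated nodes, because one must verify that the variational identity above still holds when $B_{22}$ is only PSD (possible if the eliminated subnetwork is poorly conditioned) and not strictly PD; the standard remedy is to use the pseudo-inverse form of the Kron reduction (or perturb $B_{22}$ by $\varepsilon I$ and take $\varepsilon\to 0$) and to restrict $x$ to the range over which the reduction is well defined. Away from this technicality the argument is a straightforward Weyl-type monotonicity once the min-max form is in place.
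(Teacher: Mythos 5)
Your proof is correct, but it takes a genuinely different route from the paper's. The paper treats the three perturbations separately: for line addition it simply cites Theorem III.2 of its earlier reference \cite{gorbunov2020identification}, and for the droop-gain increase it writes $\tilde{C}=DC$ with $D=\mathrm{diag}(1,\dots,d_k,\dots,1)$, $d_k>1$, and invokes a multiplicative version of Weyl's inequality to obtain the two-sided bound $\mu_i\le\tilde{\mu}_i\le d_k\mu_i$. You instead unify everything under a single Courant--Fischer characterization of the pencil $(\mathcal{B},M^{-1})$, starting from the symmetrization $M^{1/2}\mathcal{B}M^{1/2}$ that the paper itself introduces just before the theorem, and reduce each claim to pointwise monotonicity of the generalized Rayleigh quotient: increasing $m_k$ lowers the denominator since $M'^{-1}\preceq M^{-1}$ for positive diagonal $M'\succeq M$, while adding or strengthening a line raises the numerator via a rank-one PSD edge Laplacian, with Schur-complement monotonicity covering edges incident to Kron-eliminated buses. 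What your approach buys is self-containedness (the line-addition case is actually proved rather than imported) and one mechanism for all three claims, correctly extended through the Kron reduction; what the paper's approach buys is brevity and, for the droop case, the additional upper estimate $\tilde{\mu}_i\le d_k\mu_i$, which your argument does not directly yield (the theorem does not need it). Two minor points: with $\max_{\dim V=i}\min_{x\in V}$ the Courant--Fischer formula returns the $i$-th \emph{largest} eigenvalue, which is harmless here since the conclusion is that every eigenvalue weakly increases; and the singular-$B_{22}$ technicality you flag is benign, because the eliminated block is invertible whenever every eliminated component is connected to a retained bus, which is the standing assumption for the Kron reduction of Section II.
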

\begin{proof}

The proof for line addition can be found in \cite[Theorem~III.2]{gorbunov2020identification}.
Here, we show that $\mu_i$ increases with the droop gains, $m_k$. Let us represent the increase of $m_k$ by its multiplication by some value $d_k>1$. In this case, we relate original $C$ with $\tilde{C}$ for the increased droop $d_k m_k$ as $\tilde{C} = D C$, where $D = \text{diag}(1, \cdots, d_k, \cdots, 1)$. Next, we use a multiplicative version of the Weyl's inequality \cite[Theorem~4.1]{so1994commutativity}:
    \begin{equation}
        \mu_i \leq \tilde{\mu}_i \leq  d_k\mu_i, \ i=1,\cdots, v \ ,
    \end{equation}
    where $\tilde{\mu}_i$ are eigenvalues of $\tilde{C}$.
\end{proof}

Therefore, Theorem \ref{theorem_addition_line} suggests that the addition of a new line or increase of the susceptance of any existing line makes an inverter-based microgrid less stable. The same is also true for an increase in any inverter droop gain. The property is entirely consistent with the previous results in \cite{vorobev2017high}, \cite{vorobev2018towards}. This property is a distinctive feature of inverter-based microgrids in comparison with conventional power systems.

\subsection{Worst-case droop gains and R/X ratios} \label{sec:identification_procedure}
In order to generalize our approach to networks with non-uniform $R/X$ ratios we note that Fig. \ref{fig:mu_cr_map_proj} suggests that there exist a certain \textit{worst-case} scenario, characterised by certain values of $\rho_{ext}, k_{ext}$, and $\mu_{cr,min}$ (specifically,  $\rho_{ext}=1.3$, $k_{ext}=0.3$, and $\mu_{cr,min}=0.826$ from Fig. \ref{fig:mu_cr_map_proj}) such that any deviation from $\rho_{ext}$ and $k_{ext}$ makes the system more stable (i.e., increases the stability boundary $\mu_{cr}$). 
For example, Fig. \ref{fig:kundur_eigplot_k_rho_var} illustrates the conservative stability boundary with worst-case scenario for the two-area system of Fig. \ref{fig:kundur_decoupling}. Namely, red crosses are the eigenvalues of \eqref{eq:dyn_model} with $R/X = \rho_{ext}, m_i/n_i=k_{ext}$ and droop gains chosen such that the system is marginally stable (i.e., $\mu_v = \mu_{cr,min} = 0.826$ of Fig. \ref{fig:mu_cr_map_proj}), while purple dots are eigenvalues corresponding to $10000$ random samples of $R/X \in [0.4, 2.5]$ and $k=m_i/n_i \in [0.3,5]$ for the same system. All the purple dots are lying in the left-half plane with a higher stability margin than the marginal eigenvalue, which therefore confirms the system with $k_{ext}, \rho_{ext}$ is the least stable one.

Moreover, we prove the following local stationarity conditions for non-uniform $R/X$, indicating the theoretical existence of such worst case.

\begin{theorem}
    Assuming uniform droop ratio $k=m/n$, the stationary value $\rho_{ext}$ for the system with uniform $R/X$ \eqref{eq:5th_model} is also the stationary value for the system \eqref{eq:dyn_model} with non-uniform $R/X$, i.e., if $\frac{\partial \Re(\lambda)}{\partial \rho}\Bigr|_{\rho_{ext}} = 0$, then $\frac{\partial \Re(\lambda)}{\partial \rho_j}\Bigr|_{\rho_{ext}} = 0, \ \forall j$,
    where $\rho_j$ is the $R/X$ ratio of the $j$-th line.
\end{theorem}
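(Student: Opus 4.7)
The plan is first-order eigenvalue perturbation theory, combined with the mode decomposition used in Theorem~\ref{theorem:one}. I would work with the full (non Kron-reduced) state matrix $A(\rho_1, \dots, \rho_n)$ of the linearized model \eqref{eq:dyn_model}: since $\rho_e$ enters $A$ only through the damping terms $-\omega_0\rho_e\mathcal{I}_d^e$ and $-\omega_0\rho_e\mathcal{I}_q^e$ in \eqref{eq:dyn_Id}--\eqref{eq:dyn_Iq}, the derivative $\partial A/\partial \rho_e = -\omega_0 E_e$ is simply the diagonal indicator of the two line-$e$ current states. The standard perturbation formula then gives
\[
\frac{\partial \lambda}{\partial \rho_e} = -\omega_0\,\frac{w_{\mathcal{I}_d^e}v_{\mathcal{I}_d^e}+w_{\mathcal{I}_q^e}v_{\mathcal{I}_q^e}}{w^{T}v},
\]
with $v,w$ the right and left eigenvectors of $A$ at $\lambda$.

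The main calculation is to evaluate this at the uniform point $\rho_e = \rho_{ext}$. Solving the Laplace-domain line equations \eqref{eq:Ohms_law} expresses every line-current component through nodal differences along edge $e$; e.g.\ $v_{\mathcal{I}_d^e}=[\eta \Delta v_V^e + \Delta v_\theta^e]/[X^e(\eta^2+1)]$ with $\eta=\rho+\lambda/\omega_0$. Theorem~\ref{theorem:one} constrains the network modes so that at a uniform $\rho$ the full-state right eigenvector has the product form $v_\theta=\xi$, $v_V=\alpha_V(\lambda)\xi$, $v_\omega=\lambda\xi$, where $\xi$ is an eigenvector of $C=M\mathcal{B}$ with eigenvalue $\mu_i$; the $M^{1/2}$-similarity used in Section~\ref{sec:sensitivity} implies that the corresponding left eigenvector inherits $w_\theta\propto M^{-1}\xi$ with a matching cascade through the adjoint $\omega$- and $V$-equations. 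Substituting these into the line-current formulas, every one of $v_{\mathcal{I}_d^e},v_{\mathcal{I}_q^e},w_{\mathcal{I}_d^e},w_{\mathcal{I}_q^e}$ becomes an edge-independent (complex) scalar times $\Delta\xi^e/X^e$, so the bilinear numerator collapses to
\[
w_{\mathcal{I}_d^e}v_{\mathcal{I}_d^e}+w_{\mathcal{I}_q^e}v_{\mathcal{I}_q^e} = T(\lambda,\rho,\mu_i)\,\frac{(\Delta\xi^e)^2}{X^e},
\]
where $T$ is edge independent and $(\Delta\xi^e)^2/X^e$ is real because $\xi$ may be chosen real (since $M^{1/2}\mathcal{B} M^{1/2}$ is symmetric).

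Summing over $e$ gives $d\lambda/d\rho|_{\text{uniform}}=-\omega_0\, T(w^{T}v)^{-1}\,\xi^{T}\tilde{B}\xi$ with $\tilde{B}$ the Kron-reduced susceptance matrix, and since $\xi^{T}\tilde{B}\xi>0$, the uniform stationarity $\Re(d\lambda/d\rho)|_{\rho_{ext}}=0$ forces $\Re(T/w^{T}v)=0$. Because the edge weight $(\Delta\xi^e)^2/X^e$ is real, each partial derivative then has
\[
\Re\!\left(\frac{\partial\lambda}{\partial \rho_e}\right)\bigg|_{\rho_{ext}} = -\omega_0\,\Re\!\left(\frac{T}{w^{T}v}\right)\frac{(\Delta\xi^e)^2}{X^e} = 0,
\]
which is the claim. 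The main obstacle is step two---verifying the alleged product form of both eigenvectors. Specifically, it must be checked that the left-eigenvector cascade starting from $w_\theta\propto M^{-1}\xi$ really does produce identical edge-independent scalar coefficients for $w_{\mathcal{I}_d^e}$ and $w_{\mathcal{I}_q^e}$, so that the phase of $w_{\mathcal{I}_d^e}v_{\mathcal{I}_d^e}+w_{\mathcal{I}_q^e}v_{\mathcal{I}_q^e}$ is independent of $e$. This requires carefully tracking the $A^{T}$-based analogs of the node equations for $\omega$ and $V$, mirroring the derivation of the compact form \eqref{eq:theta_repr} on the adjoint side.
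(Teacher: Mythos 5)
Your proposal follows essentially the same route as the paper's Appendix~\ref{app:stationarity_proof}: first-order eigenvalue sensitivity with respect to each $\rho_j$, followed by showing that the $j$-dependence of the numerator reduces to a real, edge-local weight ($X_j|[\phi_d]_j|^2$ in the paper, equivalently your $(\Delta\xi^e)^2/X^e$) multiplying an edge-independent complex factor, so stationarity at the uniform point kills every partial derivative. The paper merely sets this up in the line-current polynomial eigenvalue form $G(s)$ of Lemma~\ref{lemma:G_representation} instead of the full state matrix, and it likewise compresses the left/right eigenvector bookkeeping you flag as the main obstacle into ``straightforward but cumbersome manipulations.''
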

\begin{proof}
    This proof is based on the following representation $\frac{\partial\lambda}{\partial \rho_j} = \alpha_j f(\rho)$ with some real constants $\alpha_j$ and complex function $f(\rho)$, which is the same for each $j$. If the above representation is obtained, then indeed all $\frac{\partial\Re{\lambda}}{\partial \rho_j}\Bigr|_{\rho_{cr}} = 0$ whenever $\Re(f(\rho_{cr})) = 0$. An abbreviated version of the proof is in Appendix \ref{app:stationarity_proof}.
    %is given in the Appendix. 
\end{proof}

While the condition $\mu_i=\mu_{cr}$ for the eigenvalues of $C$ provides the exact stability boundary only for the system with uniform $R/X$ and $m/n$, matrix $C$ could still be constructed even for a system with non-uniform parameters. Indeed, by definition, $C$ is constructed using only the values of line reactances $X$ and inverter frequency droop gains $m_i$, and not taking into account line resistances $R$ and inverter voltage droops $n_i$. %However, the order of $C$'s eigenvalues, i.e., $\mu_i$, would not necessarily correspond to the order of the proximities to the stability boundary. 
Moreover, this matrix (more precisely, its eigenvalues) can be used to get conservative stability boundaries for systems with arbitrary network parameters. As was shown in the previous paragraph, we can define the \textit{worst-case} threshold $\mu_{cr, min}$ such that all systems satisfying:
\begin{equation}\label{eq:worst_case}
    \mu_i(C) < \mu_{cr, min},\ \forall i \ 
\end{equation}
are definitely stable. However, in the case of non-uniform $R/X$, violation of condition \eqref{eq:worst_case} does not necessarily mean that the system is unstable. The value $\mu_{cr,min}$ equals to the minimum value of $\mu_{cr}$ across a reasonable range of $\rho \in [0.4, 2.5]$ and $k \in [0.3,5]$, as demonstrated in Fig. \ref{fig:mu_cr_map_proj}, where $\mu_{cr,min} = 0.826$ takes its minimum at $\rho_{ext} = 1.3, k_{ext} = 0.3$. We note the threshold $\mu_{cr,min}$ is universal for any system with parameters in the given range of $\rho,k$. 

\begin{figure}
    \centering
    \includegraphics[width=0.4\textwidth]{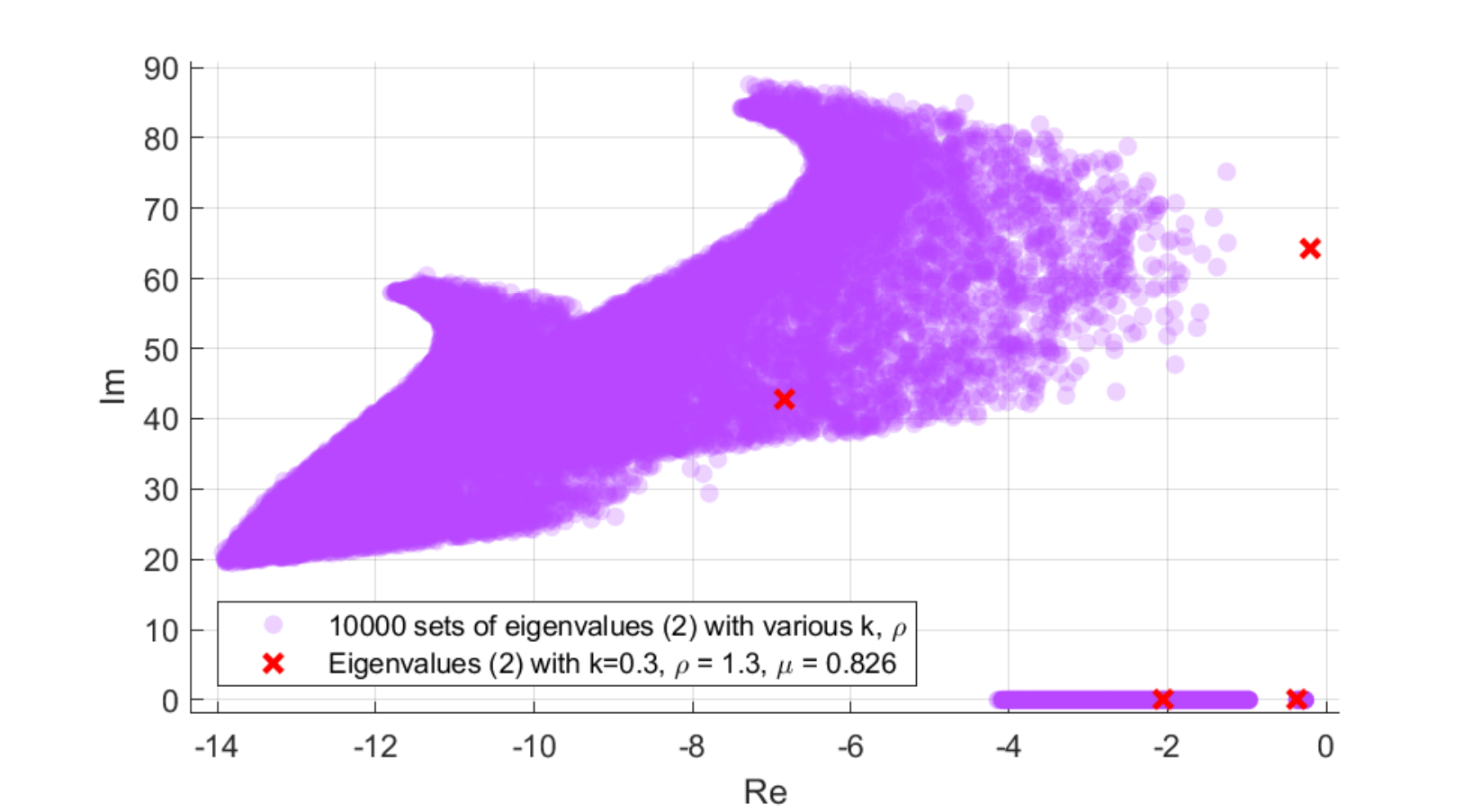}
    \caption{Eigenvalue plot for the two-area system of Fig. \ref{fig:kundur_decoupling} with various (non-uniform) $R/X$ and $M/N$ parameters. Only the dominant modes of the upper-half plane are shown.}
    \label{fig:kundur_eigplot_k_rho_var}
\end{figure}

%----------------------------------------------------%
%               S E C T I O N  V
%----------------------------------------------------% 
\section{Stability Regions of Individual Inverters} \label{sec:stability_regions}

In this section, approximate (conservative) \emph{stability region} in the multidimensional space of droop gains $m_i$ and $n_i$ of all the system inverters will be derived using $\mu_{cr,min}$ \eqref{eq:worst_case}. Such stability regions are valuable for practical applications and can be utilized, for example, to optimize the values of gains concerning some criteria (e.g., in \cite{dorfler2015breaking}). In order to make the value of our stability assessment method clear, we note that the exact stability boundary for a multi-inverter microgrid is a complicated hyper-surface in the $2v$-dimensional space ($v$ is the number of inverters in the system) of the droop gains. While direct numerical analysis can be used to certify the stability of the system \eqref{eq:dyn_model} under fixed values of all the parameters, thus, verifying the stability of one particular point in the parameters space, and explicit analysis of many such operating points is needed in order to certify stability in some region of parameter values. This makes identification of a stability area unfeasible even for systems with a moderate number of inverters since the number of operating points to check grows exponentially with the increase of the dimension of parameters space (i.e.,, with the increase of the number of inverters). The key advantage of our approach over direct numerical simulations using \eqref{eq:dyn_model} is that the provided regions can be found with much less computational effort, which is almost independent of the number of inverters.

Note that stability boundary given by the criterion $\mu_v = \mu_{cr,min}$ (i.e., when the highest $\mu_v$ crosses the threshold) results in the surface in the space of droop gains $m_i$ having the complicated relationships between each other. To find the closed-form representation, we consider a specific family of the stability regions that provide an upper boundary of $m_i$ for each inverter independently. In such a manner, the inverters could autonomously choose the droop gain from the given region without knowing others' droop gains. Namely, we provide \textit{certified stability area} $\mathcal{G}$ as follows:
    \begin{equation}
        \mathcal{G} = \bigcup_i^v \mathbf{G}_i \ , 
    \end{equation}
where $\mathbf{G}_i$ is the individual certified stability area for each inverter. 

There are multiple ways to construct $\mathbf{G}_i$ regions for each inverter. Specifically, any point of the surface $\mu_v = \mu_{cr,min}$ provides an individual upper boundaries for $m_i$. We start from a particular case corresponding to equal power sharing (i.e., uniform droop gains for all inverters). In this case, all the sets $\mathbf{G}_i$ are the same and can be found as follows:
\begin{equation} \label{eq:stability_region_equal_power_sharing}
    \mathbf{G} =\{(m,n) | \ \frac{m}{5} \leq n \leq \frac{m}{0.3} | \ m \leq \frac{\mu_{cr, min}}{\lambda_{max}(\mathcal{B})} \} ,
\end{equation}
where $\lambda_{max}(\mathcal{B})$ is the maximum eigenvalue of $\mathcal{B}$, and $\mu_{cr, min}=0.826$ - the threshold value for $\mu$. Here we use $\mathbf{G}$ to denote the stability region of every inverter, and $m, n$ are the droop gains of every inverter. Inequalities from \eqref{eq:stability_region_equal_power_sharing} are obtained using $\mu_v = m \lambda_{max}(\mathcal{B})$, which is true because $M = m \mathbf{1}$ for the equal droops. An example of such certified stability region $\mathbf{G}_i$ is given at the Fig.\ref{fig:stability_region_M_N} by the blue triangle (the rest of Fig.\ref{fig:stability_region_M_N} will be discussed in the next section). We note, that all $\mathbf{G}_i$'s represent \emph{convex sets} as they are restricted by linear inequalities, hence their union $\mathcal{G}$ is also a convex set in the multidimensional space of inverters droop gains.

Stability regions in \eqref{eq:stability_region_equal_power_sharing} have been obtained under the assumption of equal power sharing, i.e., equal droop gains for all inverters. However, in practice, it can be more beneficial (and sometimes required due to technical specifications) to realize operation under non-equal droop gains. Although equations \eqref{eq:stability_region_equal_power_sharing} were obtained under equal droop gains assumption, they remain valid even if we assign different droop gains to different inverters, as long as every $m_i$ and $n_i$ remain in the corresponding region $\mathbf{G}_i$. However, certified stability regions $\mathbf{G}_i$ from \eqref{eq:stability_region_equal_power_sharing} are rather conservative for most of the inverters since they are limited by the droop gains of inverters that have the least stability margin (form the \emph{critical cluster}). Any $\mathbf{G}_i$ in \eqref{eq:stability_region_equal_power_sharing} cannot be enlarged without reducing the boundary of at least one other inverter, i.e., \eqref{eq:stability_region_equal_power_sharing} expresses a Pareto frontier. However, the certified stability regions for the non-critical inverters can be significantly enlarged by slightly restricting the stability boundary of the critical inverters. Enlarged $\mathcal{G}$ provides bigger feasibility set for optimization problems. For example, the optimal droop gains for economic dispatch are inversely proportional to the marginal costs \cite{dorfler2015breaking}. Therefore, they are non-equal if the system has various energy sources, and enlarged $\mathcal{G}$ can provide a more economical solution.

To increase the certified stability area $\mathcal{G}$ (through the increase of individual $\mathbf{G}_i$'s), we search for the distribution of droop gains $m_{i,max}$ such that all inverters in the system become similarly critical. If all non-zero eigenvalues $\mu_i$ are equal, then all inverters have the same criticality $\mu_v$. Therefore, to enlarge $\mathcal{G}$, we are to find a set of $m_{i,max}$ such that all $\mu_i$ are equal.
Generally, it is impossible to make all eigenvalues of $C=M\mathcal{B}$ exactly equal by only varying the diagonal matrix of droops $M$. Nevertheless, we could minimize the difference between the largest $\mu_v$ and the smallest non-zero $\mu_2$ as follows:
\begin{equation} \label{eq:optimization}
\begin{aligned}
\min_{M} \quad & \mu_v - \mu_2 \\
\textrm{s.t.} \quad & \mu_v(M\mathcal{B}) = \mu_{cr,min} = 0.826,\\
  &M \succ 0,  \\
  &M \ \text{diagonal}.
\end{aligned}
\end{equation}

Optimization \eqref{eq:optimization} could be performed numerically using the semidefinite programming \cite{shafi2010designing}. While the solution of such an optimization problem can provide a good result, the resulting stability regions  $\mathbf{G}_i$ will be found numerically. 
Here, to get closed-from expressions, we consider another approach based on placing all $\mu_i$ within the same range of values. Specifically, we use the Gershgorin circle theorem to estimate the range of possible values for each $\mu_i$. The theorem says that $\mu_i$ lies within disks centered at $C_{ii}$ with the radius $r_i = \sum_{j\neq i} |C_{ij}| = m_i \sum_{j\neq i} |\mathcal{B}_{ij}| = m_i B_{ii}$. Using Gershgorin disks for each row of $C$ and the fact that $\mu_i$ are real, the following range of values for $\mu_i$ can be derived:

\begin{equation} \label{eq:Gershgoring_for_C}
    0\leq \mu_i \leq 2 m_i \mathcal{B}_{ii}, \ i=1,\cdots,v \ ,
\end{equation}
where $\mathcal{B}_{ii}$ is the $i$-th diagonal element of $\mathcal{B}$. 

To find the values for $m_{i,max}$ two steps are required: first making all the ranges in \eqref{eq:Gershgoring_for_C} equal, and second finding the $m_{i,max}$ corresponding to $\mu_{cr,min}$ for each of them. It is clear that by making all $m_i = \frac{\alpha}{\mathcal{B}_{ii}}$ with some $\alpha > 0$, we can obtain the same boundaries of eigenvalues according to \eqref{eq:Gershgoring_for_C}: $0\leq \mu_i\leq 2\alpha$. Now, it is necessary to define $\alpha$ such that the highest $\mu_v$ is crossing $\mu_{cr,min}$. For doing so, let us define $C_r = \text{diag}(1/\mathcal{B}_{11}, \cdots, 1/\mathcal{B}_{vv}) \mathcal{B}$ then the stability regions $\mathbf{G}_i$ are defined by the following expressions (instead of  \eqref{eq:stability_region_equal_power_sharing}):
\begin{equation} \label{eq:stability_region_non-equal_power_sharing}
    m_i \leq \frac{\mu_{cr,min}}{\lambda_{max}(C_r)\mathcal{B}_{ii}} \ ,
\end{equation}
where $\lambda_{max}(C_r) > 0$ is the highest eigenvalue of $C_r$. It is noteworthy that $\lambda_{max}(C_r) < 2$ according to the Gershgoring circle theorem, which leads to even simpler but more conservative than \eqref{eq:stability_region_non-equal_power_sharing} boundaries:

\begin{equation}\label{eq:stability_regions_final}
    m_i \leq \frac{\mu_{cr,min}}{2\mathcal{B}_{ii}} \ .
\end{equation}
We note, that for both equations \eqref{eq:stability_region_non-equal_power_sharing} and \eqref{eq:stability_regions_final} the boundaries for voltage droop gains $n_i$ are $m_i/5 \leq n_i \leq m_i/0.3$. We also remind that $\mu_{cr,min}$ is a constant (equal to $0.826$). Boundaries given by \eqref{eq:stability_region_non-equal_power_sharing}, in contrast to \eqref{eq:stability_region_equal_power_sharing}, are taking into account the connectedness of inverters as they are inversely proportional to $\mathcal{B}_{ii}$.  Recall that the diagonal element $\mathcal{B}_{ii}$ equals to the sum of admittances of lines connected to the $i$-th node and it is small for weakly connected non-critical inverters and high for strongly connected critical inverters. 

One of the main advantages of \eqref{eq:stability_region_non-equal_power_sharing} and \eqref{eq:stability_regions_final} is that they represent \emph{closed-form} expressions for the stability region of the whole microgrid in the multi-dimensional space of inverter droop gains. Such representation (as compared to numerically calculated boundaries or individual operating points) can be used, for example, to define a feasible domain for optimization problems. As seen from Fig. \ref{fig:stability_region_M_N}, our method has some degree of conservativeness, where only a part of the true stability region can be certified. However, this conservativeness is compensated by the simple analytical forms \eqref{eq:stability_region_equal_power_sharing} or \eqref{eq:stability_region_non-equal_power_sharing} that do not require a lot of computations. We note that the exact stability boundary can not be calculated within a feasible time, even for moderate systems size, as demonstrated in Section \ref{sec:numerical}. Therefore, the exact stability boundary is practically unavailable by direct numerical calculations. 
%----------------------------------------------------%
%               S E C T I O N  V
%----------------------------------------------------% 
\section{Numerical Validation}\label{sec:numerical}

In this section, we demonstrate our methodology on a realistic IEEE 123-bus system with $10$ grid-forming inverters shown in Fig \ref{fig:IEEE123_case2} \cite{schneider2017analytic} and with loads modeled using their effective impedances. There are four types of lines in the network with the $R/X$ ratios of approximately $0.4$, $0.6$, $1.0$, and $2.0$. All the simulations of this section were run on MATLAB software \footnote{\href{https://github.com/goriand/Identification-of-Stability-Regions-in-Inverter-Based-Microgrids}{https://github.com/goriand/Identification-of-Stability-Regions-in-Inverter-Based-Microgrids}}.

First, we illustrate the certified threshold given in \eqref{eq:stability_region_equal_power_sharing}.
 To enable 2-$D$ visualisation, we have assumed uniform droop gains of the inverters. Fig. \ref{fig:stability_region_M_N} shows the resulting  stability regions as the function of (uniform) voltage and frequency droop gains. The blue region corresponds to the certified stability region $\mathcal{G}$ defined by \eqref{eq:stability_region_equal_power_sharing} as $m_i < 5.29\%$. Note that this region provides the upper limit for only frequency droops $m_i$ resulting in the vertical line in Fig. \ref{fig:stability_region_M_N_non-equal_power_sharing}.
The curved boundary line of the green region corresponds to the actual stability boundary, calculated using \eqref{eq:dyn_model}.
Namely, for each $k\in[0.3,5]$ the boundary value of $m_i$, for which $\Re{\lambda} = 0$, was calculated using binary search. Gray sectors represent regions outside the assumed range for $k = \frac{m}{n} \in [0.3,5]$ such that the analyzed region lies in between lines $m = 0.3 n$ and $m = 5 n$.  

\begin{figure}
    \centering
    \includegraphics[width=0.4\textwidth]{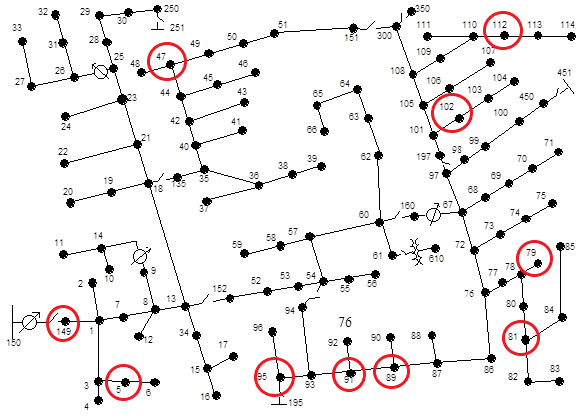}
    \caption{IEEE $123$-bus system. Inverters are located at nodes: $95, 149, 79, 5, 102, 112, 81, 91, 89, 47$ (denoted by red circles).}
    \label{fig:IEEE123_case2}
\end{figure}

\begin{figure}
    \centering
    \includegraphics[width=0.4\textwidth]{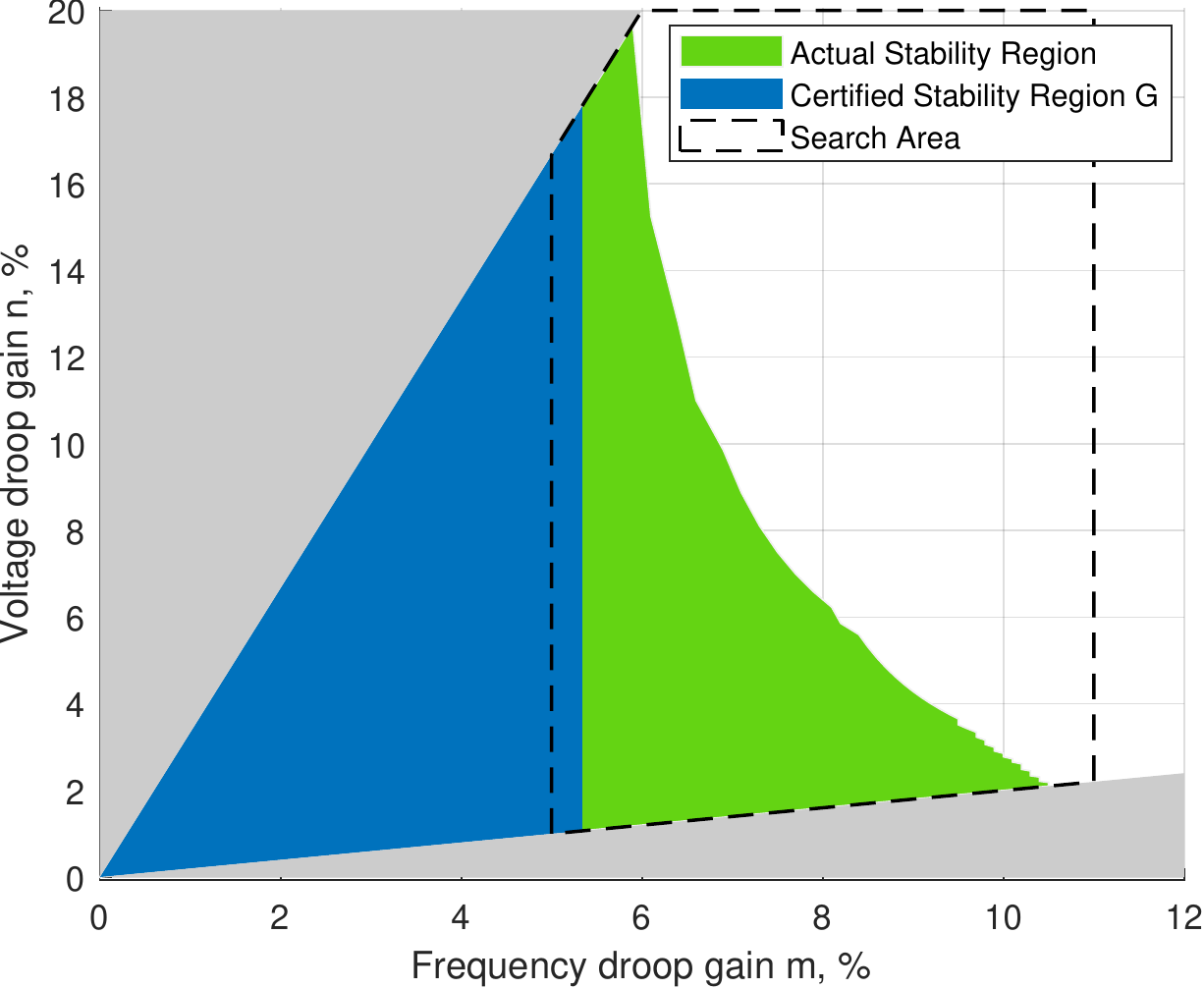}
    \caption{Stability regions under equal droop gains assumption. The certified stability region $\mathbf{G}$ \eqref{eq:stability_region_equal_power_sharing} is given in blue, while the true stability boundary (that is possible to find under the equal droop gains assumption) is in green.}
    \label{fig:stability_region_M_N}
\end{figure}

\begin{figure}
    \centering
    \includegraphics[width=0.4\textwidth]{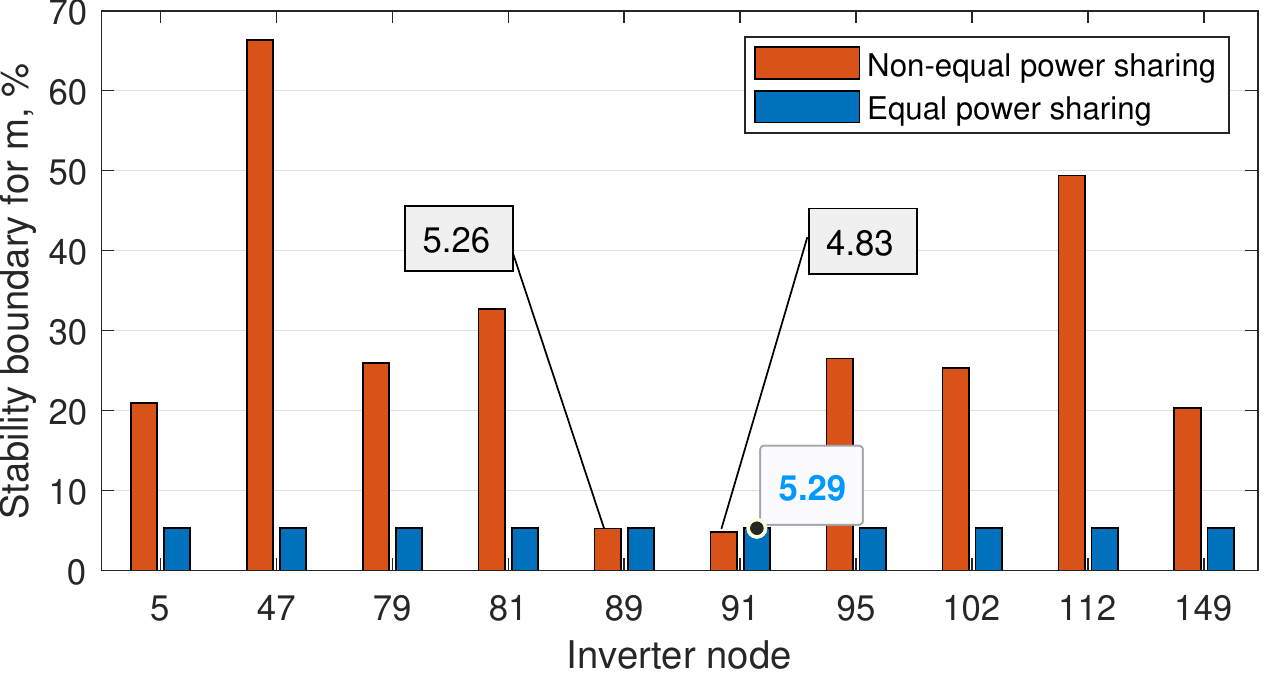}
    \caption{Maximum frequency droop gains of individual inverters $m_{i,max}$ (corresponding to certified stability regions $\mathbf{G}_i$) with equal and non-equal droop gains assumption according to \eqref{eq:stability_region_equal_power_sharing} and \eqref{eq:stability_region_non-equal_power_sharing} respectively. 
    }
    \label{fig:stability_region_M_N_non-equal_power_sharing}
\end{figure}

Next, let us consider the construction of $\mathcal{G}$ based on \eqref{eq:stability_region_non-equal_power_sharing} instead of \eqref{eq:stability_region_equal_power_sharing}, i.e., without assuming equal droop gain. In this case, for each inverter there is a certified triangle $\mathbf{G}_i$, similar to the blue triangle of Fig. \ref{fig:stability_region_M_N}, bounded by the same range of slopes $k=m/n \in [0.3, 5]$ but with a different value of the maximum frequency droop gain $m_{i,max}$ for each inverter. Fig. \ref{fig:stability_region_M_N_non-equal_power_sharing} provides the values of $m_{i,max}$ for this case, and also, for comparison, for the case of equal droop gains.  
Note that it is impossible to show the actual stability boundaries, similarly as in Fig. \ref{fig:stability_region_M_N}, without searching the possible solution space as droop gains are no longer uniform.
The suggested $\mathcal{G}$ with non-equal power sharing increases the range of values of feasible (stable) gains and therefore provides more flexibility in tuning them. The cost we pay for enlarging the region $\mathcal{G}$ is a slightly reduced gain of the critical inverters in nodes 89 and 91. However, the certified stability regions $G_i$ for other inverters can be significantly enlarged. For instance, it is enlarged by more than $10$ times for inverter at node $47$ compared with equal power sharing case \eqref{eq:stability_region_equal_power_sharing}. At the same time, the $G_i$ for the critical inverter at node $89$ is reduced by only $\sim 10\%$ using \eqref{eq:stability_region_non-equal_power_sharing}.

To verify that $\mathcal{G}$ indeed corresponds to stability region, in Fig. \ref{fig:pole-zero} we show an eigenvalue plot comparing the eigenvalues of the system \eqref{eq:dyn_model} with each $m_{i,max}$ equal to the upper limits of \eqref{eq:stability_region_non-equal_power_sharing} and \eqref{eq:stability_region_equal_power_sharing}. The values of voltage droop gains $n_{i,max} = m_{i,max}/0.3$, which correspond to the maximum values of $m_i, n_i$ inside $\mathbf{G}_i$. All poles have negative real parts that show stability of the chosen point of $\mathcal{G}$.

In addition, to validate the worst-case condition discussed in Section \ref{sec:identification_procedure}, eigenvalues for a system \eqref{eq:dyn_model} with various non-uniform $R/X$ and $k$ are provided in Fig. \ref{fig:IEEE123_eigenplot_nonequal_k_rho_var}. Specifically, the eigenvalues for s system with worst-case uniform parameters $\rho_{ext} = 1.3, k_{ext} = 0.3$ (green and black crosses) are compared with eigenvalues of $500$ randomly sampled systems with non-uniform $\rho, k$ for both cases of equal \eqref{eq:stability_region_equal_power_sharing} and non-equal \eqref{eq:stability_region_non-equal_power_sharing} droops (purple and orange dots). All purple and orange dots are in the left-half plane comparing with marginally stable crosses lying on the imaginary axis, i.e., any deviation from $\rho_{ext}, k_{ext}$ stabilizes the system.
This confirms the validity of our methodology for the systems with non-uniform values of $\rho$ and $k$.
\begin{figure}
    \centering
    \includegraphics[width=0.4\textwidth]{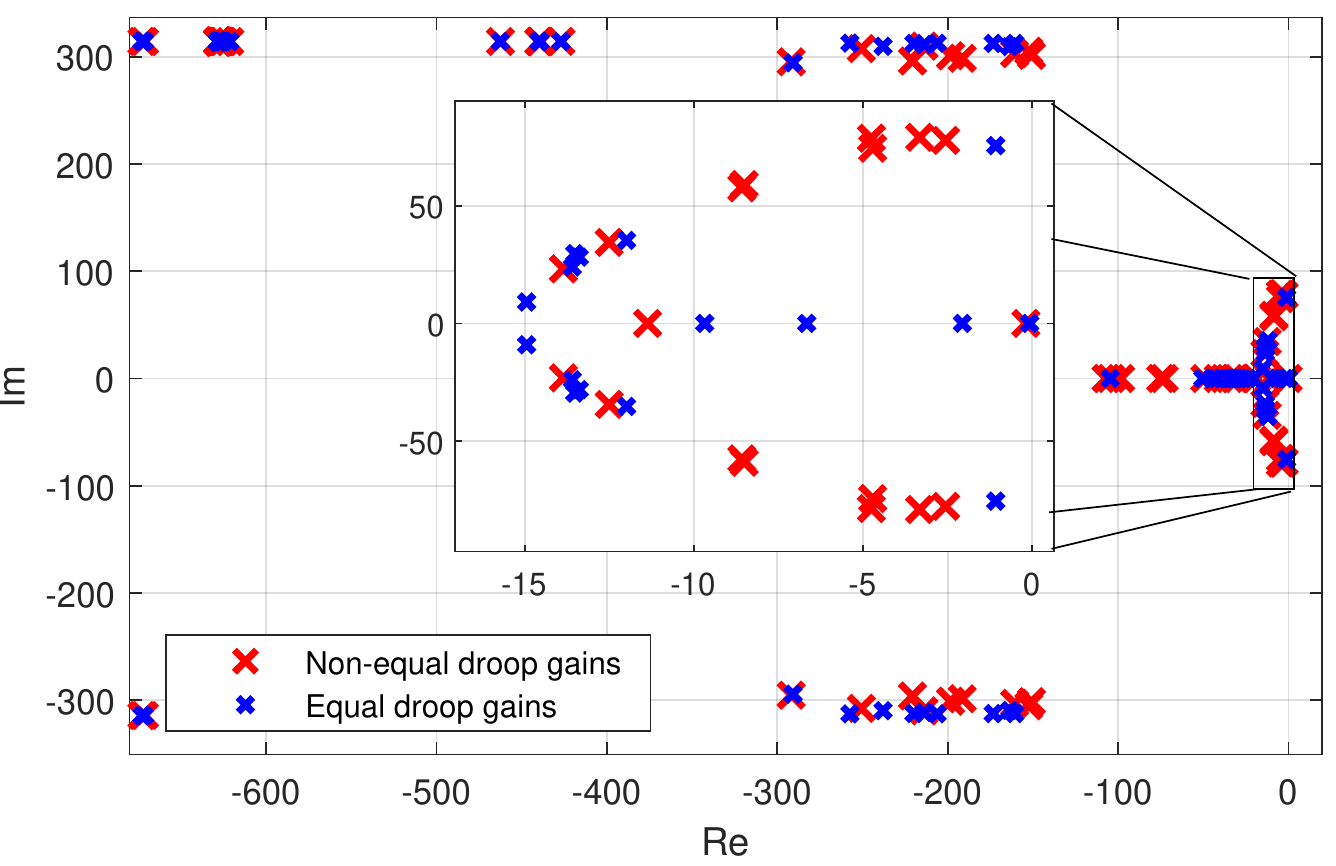}
    \caption{Eigenvalue plot for the system of Fig. \ref{fig:IEEE123_case2} (with full dynamic model \eqref{eq:dyn_model}) for the case $m_i = m_{i,max}$ according to \eqref{eq:stability_region_equal_power_sharing} (blue crosses) and \eqref{eq:stability_region_non-equal_power_sharing} (red crosses).}
    \label{fig:pole-zero}
\end{figure}

\begin{figure}
    \centering
    \includegraphics[width=0.45\textwidth]{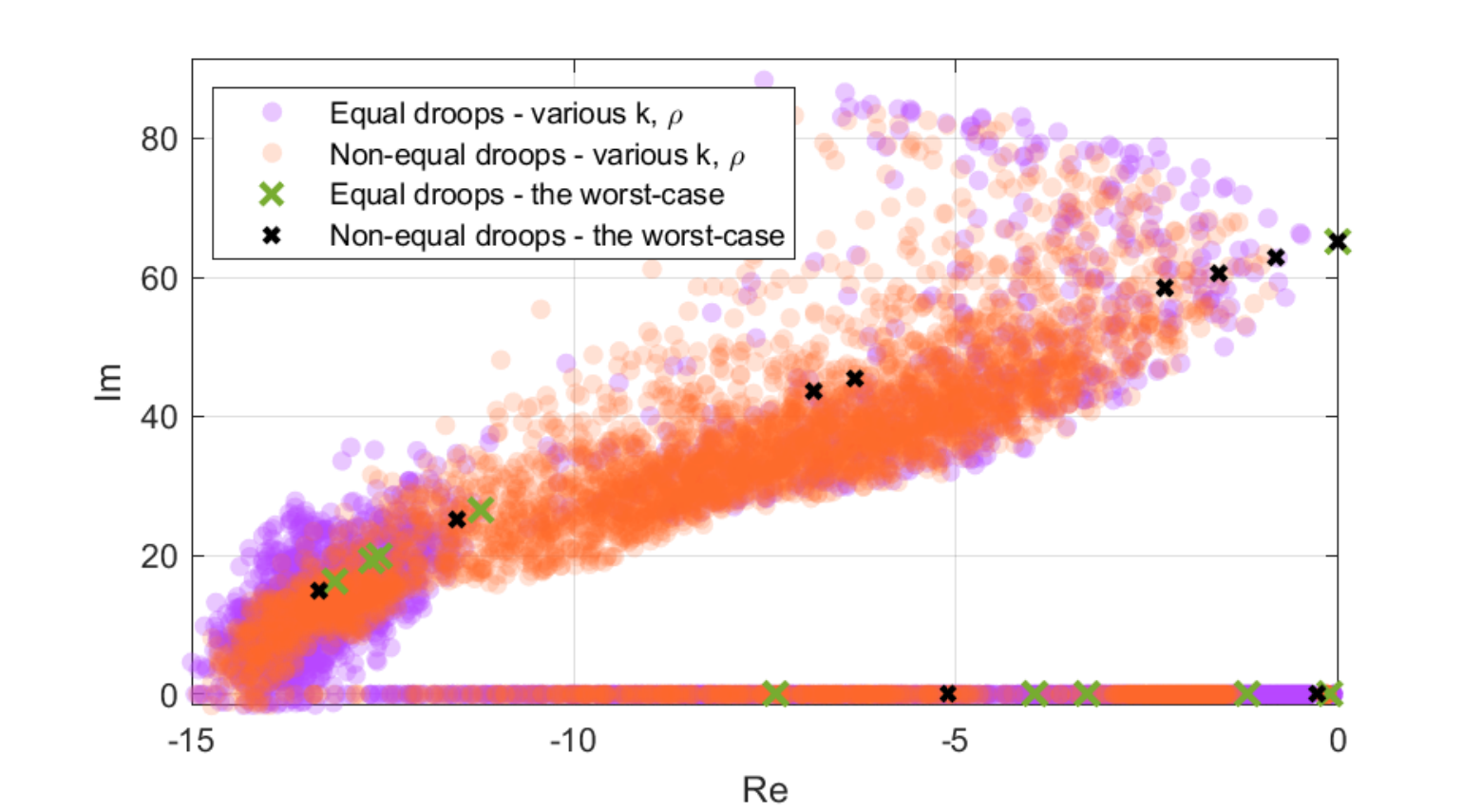}
    \caption{Eigenvalue plot for the system of Fig. \ref{fig:IEEE123_case2} with variable (and non-uniform) $k, \rho$ for equal \eqref{eq:stability_region_equal_power_sharing} and non-equal \eqref{eq:stability_region_non-equal_power_sharing}  droop cases. Only the dominant modes of an upper-half plane are shown. The eigenvalues for the non-equal droops case are shown in orange, while the eigenvalues for the equal droops case are purple (orange dots are superimposed on purple dots). All dots are transparent so that darker areas correspond to a higher density of eigenvalues.}
    \label{fig:IEEE123_eigenplot_nonequal_k_rho_var}
\end{figure}

\subsection{Computational Complexity}
As was shown in the previous section, our methodology allows finding approximate stability regions for microgrids with an arbitrary number of inverters - for instance,  the blue area in Fig. \ref{fig:stability_region_M_N} determines a range of gain values of certified (although conservatively) stable points. Although every individual operating point in our region could possibly be checked by direct eigenvalue analysis of the system \eqref{eq:dyn_model}, construction of stability regions using such a point-by-point approach becomes computationally prohibitively expansive even for systems with a modest number of inverters.
To illustrate this, we estimate the number of operating points that need to be checked to find the true stability boundary (green region at Fig.\ref{fig:stability_region_M_N}). Note that we could find the actual stability limits in Fig. \ref{fig:stability_region_M_N} by searching the solution space because there were only two variables: $m, n$. Suppose that the goal is to search within the dashed polygon bounded by gray areas and $n_i\leq 20\%$, $5\%\leq m_i\leq 11\%$ (we assume we have already excluded from consideration the region of small values of droop gains to reduce the computational burden). 
Thus, the area in two dimensions of droop gains for every inverter to be covered by the search is $A_i \sim (11-5)5(\frac{1}{0.3} - \frac{1}{5}) \sim 10^2$ (we approximately assume the search area to be rectangular). The resulting volume in the full $2v$ dimensional space of inverter droop gains is $(A_i)^v$. Considering the mesh discretization step for each droop gain to be $\epsilon$, we can provide the lower bound of the number of points required as follows,

\begin{equation}\label{eq:num_of_points}
    N_p = \frac{(A_i)^v}{\epsilon^{2v}} \sim (\frac{10}{\epsilon})^{2v} . 
\end{equation}

This result is a classic example of the \emph{curse of dimensionality} when sampling in multi-dimensional space. For example, when $v=10$ and $\epsilon = 0.5\%$ (i.e., for a system with $10$ inverters and with a mesh discretization of $0.5\%$ for droop gains), the total number of operating points to check is $N_p \sim 10^{25}$. Although we used a rather coarse-grain estimation assuming a simple choice of the mesh, the resulting number is exceptionally large. Even if some more sophisticated methods are used to perform the sampling of operating points to check, reducing the complexity by several orders of magnitude, the resulting number of points is still way too large for any practical purpose. Table \ref{tab:comp_complexity} provides an explicit indication of the estimation of the computation time required for the systems of $4$ and $10$ inverters, respectively. Even for the former one, the computation time is very high. However, for the system of $10$ inverters, the computation complexity is so high that even if very advanced methods for eigenvalues calculation are used, it is still non-feasible to determine stability boundary by direct numerical simulations.  

On the other hand, using our proposed method, the calculation of suggested stability regions in \eqref{eq:stability_region_non-equal_power_sharing} requires only the calculation of the eigenvalue $\lambda_{max}(C_r)$, which corresponds to quadratic computation complexity $O(v^2)$. The comparison of computational times for our method and direct numerical simulations is provided in Table \ref{tab:comp_complexity}. The first two rows correspond to computation times averaged among $100$ measurements using MATLAB software run on an Intel Core i5-4690 Processor at 3.5 GHz. In contrast, the computation times in the last row are calculated based on the obtained estimation of the number of points $N = A_i^v/\epsilon^{2v} = (20)^{2v}$. Table \ref{tab:comp_complexity} clearly demonstrates that the direct computation even for a moderate two-area system of Fig. \ref{fig:kundur_decoupling} with $4$ inverters is already practically infeasible, while the time required to calculate stability region for a $10$-inverter IEEE 123-bus system is far beyond any practical reach.

\begin{table}[]
    \centering
    \newcommand{\wrap}[1]{\parbox{.26\linewidth}{\vspace{1.5mm}#1\vspace{1mm}}}
    \caption{Computational Time for Stability Assessment Using the Proposed Methods vs. Direct Numerical Approach}
    \begin{tabular}{|c||c|c|}
    \hline
          &  \wrap{Two-area system of Fig. \ref{fig:kundur_decoupling} ($4$ inverters)} & \wrap{IEEE 123-bus system of Fig. \ref{fig:IEEE123_case2} ($10$ inverters)}\\
         \hline
    \wrap{Calculation of stability region  \eqref{eq:stability_region_non-equal_power_sharing}} & $0.55$ ms & $2.40$ ms \\
    \hline
        \wrap{Calculation of eigenvalues of \eqref{eq:dyn_model} (single operating point)} & $0.14$ ms & $530$ ms \\
        \hline
        \wrap{Direct stability boundary calculation (estimated according to \eqref{eq:num_of_points} )} & $1000$ hours & $10^{18}$ years \\
        \hline
    \end{tabular}
    \label{tab:comp_complexity}
\end{table}

\section{Conclusion}

In this paper, a novel technique for stability assessment of inverter-based microgrids has been proposed that allows the construction of \emph{certified stability regions} in the multidimensional space of inverter droop gains. Contrary to the direct point-by-point approach, the numerical complexity of our method does not grow significantly with the increase in the number of inverters in the system. The method is based on the decomposition of a system into a set of two-bus equivalents through the use of the generalized Laplacian matrix $C$. We then have demonstrated that the eigenvalues of $C$ could be used as the metric of \textit{criticality}, i.e., the proximity of a group of inverters to the stability boundary. We exploited this metric to provide certified stability region $\mathcal{G}$ for a system with an arbitrary number of inverters. This region is a unification of certified stability regions $\mathbf{G}_i$ for individual inverter's droop gains, each of which represents a convex set. Therefore, the unified region $\mathcal{G}$ is also a convex set in the multidimensional space of the droop gains of all inverters, which gives significant advantages for its use in gain optimization problems. 

The proposed methodology requires far less computational effort than the direct point-by-point simulation so that stability regions in high dimensional space of droop gains can be determined.  
The certified stability regions can be conveniently calculated and visualized. The method has been tested numerically using IEEE $123$-bus system parameters with $10$ grid-forming inverters. The simulations confirmed that the calculated approximate stability boundaries are within the true stability region, which we verify by applying the full dynamic model. 

\ifCLASSOPTIONcaptionsoff
  \newpage
\fi

\bibliographystyle{IEEEtran}
\bibliography{bibl}

 \appendices
 
 \section{Elimination of the passive loads}\label{app:elimination_loads}

After Kron reduction of virtual buses, only buses with inverters and loads remain. So let us now divide the remaining buses into two sets: outputs of inverters labeled as `$O$' and load injections `$L$'.
Any load connected directly to an inverter terminal is moved to a new artificial node connected to the original inverter terminal by an infinitely low impedance. Consequently, the sets $O$ and $L$ do not intersect, and the state space matrix is divided into four blocks accordingly:

\begin{equation}
    A = \left[
    \begin{array}{c|c}
        A_{OO} & A_{OL} \\ 
        \hline
        A_{LO} & A_{LL}
    \end{array}
    \right] ,
\end{equation}
where block $A_{OO}$ is associated with the states of inverters, $\pmb{x}_O = (\pmb{\theta}, \pmb{\omega}, \pmb{V}, \pmb{I}_d, \pmb{I}_q)^T$, and has the same form as in \eqref{eq:5th_model}, but instead of the susceptance matrix with Kron-reduced load nodes $\mathcal{B} = \mathcal{B}_{OO} - \mathcal{B}_{OL}\mathcal{B}_{LL}^{-1}\mathcal{B}_{LO}$, there is just $\mathcal{B}_{OO}$ associated with inverter buses only. Other $A$ blocks are related with the states (current injections) of loads, $\pmb{x}_L = (\{\pmb{I}_d\}_L, \{\pmb{I}_q\}_L)^T$, and are listed as follows,

\begin{subequations}
    \begin{equation}
        A_{OL} = \begin{bmatrix}
            0 & 0 \\
            0 & 0 \\
            0 & 0\\
            \mathcal{B}_{OL}R_{L} & -\mathcal{B}_{OL}X_{L} \\
            \mathcal{B}_{OL}X_{L} & \mathcal{B}_{OL}R_{L}\\
            \end{bmatrix} \ ,
    \end{equation}
   where $R_L$ and $X_L$ are diagonal matrices of load impedances,
    \begin{equation}
        A_{LO} = \begin{bmatrix}
                0&0&\mathcal{B}_{LO}&0&0\\
                \mathcal{B}_{LO}&0&0&0&0
                \end{bmatrix} \ ,
    \end{equation}
    \begin{equation} \label{eq:A_ll}
        \begin{split} 
            &A_{LL} = \begin{bmatrix}
                        \mathcal{B}_{LL} R_{L} - \rho \mathbf{1} & -\mathcal{B}_{LL} X_{L} + \mathbf{1}\\
                        \mathcal{B}_{LL} X_{L} - \mathbf{1} & \mathcal{B}_{LL} R_{L} - \rho \mathbf{1}
                        \end{bmatrix} = \\ 
                        &
                    \begin{bmatrix}
                        \mathcal{B}_{LL}  & 0 \\
                        0  & \mathcal{B}_{LL}
                        \end{bmatrix}    
                    \begin{bmatrix}
                        R_{L} - R_{LL} & - X_{L} + X_{LL}\\
                         X_{L} - X_{LL} & R_{L} - R_{LL}
                        \end{bmatrix} \ ,
        \end{split}
    \end{equation}
\end{subequations}
where $R_{LL} = \rho \mathcal{B}_{LL}^{-1} = \Re{Y_{LL}^{-1}}$, $X_{LL} = \mathcal{B}_{LL}^{-1} = \Im{Y_{LL}^{-1}}$ are nodal resistance and nodal reactance matrices corresponding to buses with loads (not including load impedances $R_L, X_L$).In essence, the right-hand side of \eqref{eq:A_ll} using the complex impedance matrices could be represented as $\mathcal{B}_{LL}[Z_L - Z_{LL}]$. Further, $Z_{LL}$ is neglected using the fact that load impedance is much higher than impedances of the network lines, resulting in the following approximation:

\begin{equation}
    A_{LL} \approx \begin{bmatrix}
                        \mathcal{B}_{LL}  & 0 \\
                        0  & \mathcal{B}_{LL}
                        \end{bmatrix}    
                    \begin{bmatrix}
                        R_{L}  & - X_{L}\\
                         X_{L} & R_{L}
                        \end{bmatrix} \ .
\end{equation}

By assuming the quasi-stationary approximation for loads, the load current injections could be excluded from the state-space model as follows:

\begin{equation}
    A_{approx} = A_{OO} - A_{OL}A_{LL}^{-1}A_{LO} \ .
\end{equation}
Finally, substituting all the expressions for $A$ blocks one obtains the following:
\begin{equation}\label{eq:A_approx_complement}
\begin{split}
    &A_{OL}A_{LL}^{-1}A_{LO} = \\
    &\begin{bmatrix}
        0 & 0 &0 &0 &0\\
        0 &0 &0 &0 &0\\
        0 &0 &0 &0 &0\\
        0 &0 & \mathcal{B}_{OL}\mathcal{B}_{LL}^{-1}\mathcal{B}_{LO} &0 &0\\
        \mathcal{B}_{OL}\mathcal{B}_{LL}^{-1}\mathcal{B}_{LO} &0 &0 &0 &0
    \end{bmatrix} \ .
\end{split}
\end{equation}
Consequently, the derived $A_{approx}$ does not include $R_L, X_L$ and coincide with \eqref{eq:5th_model} because non-zero elements of \eqref{eq:A_approx_complement} complement $A_{OO}$ to the Kron-reduced susceptance matrix as $\mathcal{B} = \mathcal{B}_{OO} - \mathcal{B}_{OL}\mathcal{B}_{LL}^{-1}\mathcal{B}_{LO}$. That concludes our derivation. Note that the resulting system \eqref{eq:5th_model} is unloaded.

  \section{Stationarity with Non-homogeneous $R/X$ }\label{app:stationarity_proof}

\begin{lemma} \label{lemma:G_representation}
    Assuming that the network has been Kron-reduced, the dynamic model \eqref{eq:dyn_model} in terms of $(\pmb{\mathcal{I}_d}, \pmb{\mathcal{I}_d})$ has the following form in the Laplace domain:
    
    \begin{subequations} \label{eq:dyn_model_Idq}
    \begin{equation}
        G(s) \begin{pmatrix}
          \pmb{\mathcal{I}_d} \\ \pmb{\mathcal{I}_q}
        \end{pmatrix} = 0 \ ,
    \end{equation}
    \begin{equation}
        G(s) =  \begin{bmatrix}
            \tau_0 s I + P & -I - \frac{X^{-1}\nabla N \nabla^T}{\tau s + 1} \\
            I + \frac{X^{-1}\nabla M \nabla^T}{s(\tau s + 1)} \ & \tau_0 s I + P
        \end{bmatrix}
         \ .
    \end{equation}
       
    \end{subequations}
\end{lemma}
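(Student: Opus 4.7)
The plan is to derive \eqref{eq:dyn_model_Idq} by starting from the line-current dynamics \eqref{eq:Ohms_law} and using the inverter droop equations \eqref{eq:dyn_model} to eliminate the nodal voltages $\pmb{V}$ and angles $\pmb{\theta}$ in favor of the line currents $\pmb{\mathcal{I}_d},\pmb{\mathcal{I}_q}$. Since the network is already Kron-reduced, the only nodes carrying dynamic states are inverter buses, and the nodal current injections there are $\pmb{I_d}=\nabla^T\pmb{\mathcal{I}_d}$ and $\pmb{I_q}=\nabla^T\pmb{\mathcal{I}_q}$. Rewriting \eqref{eq:Ohms_law} in the Laplace domain and dividing by $\omega_0$ immediately gives the block-diagonal entries $\tau_0 s I + \mathcal{P}$, so the task reduces to expressing the terms $X^{-1}\nabla\pmb{V}$ and $X^{-1}\nabla\pmb{\theta}$ as linear operators acting on $\pmb{\mathcal{I}_q}$ and $\pmb{\mathcal{I}_d}$, respectively.

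First I would take the linearized inverter equations (2a)–(2c) to the Laplace domain. The frequency droop yields $(\tau s+1)\omega_i = -\omega_0 m_i P_i$, and combined with $s\theta_i=\omega_i$ this gives $\theta_i = -\omega_0 m_i P_i/[s(\tau s+1)]$. Similarly the voltage droop gives $V_i = -n_i Q_i/(\tau s+1)$. Using the inverter port relations $P_i=[I_d]_i$, $Q_i=-[I_q]_i$, these become, in matrix form at inverter nodes,
\begin{equation*}
\pmb{\theta} = -\frac{\omega_0 M}{s(\tau s+1)}\,\nabla^T\pmb{\mathcal{I}_d}, \qquad \pmb{V} = \frac{N}{\tau s+1}\,\nabla^T\pmb{\mathcal{I}_q}.
\end{equation*}

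Substituting these into the Laplace form of \eqref{eq:Ohms_law} eliminates $\pmb{V}$ and $\pmb{\theta}$ and collects the coupling terms $X^{-1}\nabla N\nabla^T/(\tau s+1)$ and $X^{-1}\nabla M\nabla^T/[s(\tau s+1)]$ (up to the $\omega_0$ normalization already folded into the definition of $M$/$\tau_0$ used in \eqref{eq:dyn_model_Idq}). Rearranging into a single block matrix equation in $(\pmb{\mathcal{I}_d},\pmb{\mathcal{I}_q})^T$ yields exactly the claimed $G(s)$.

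I expect the main obstacle to be careful bookkeeping rather than any conceptual difficulty: one must track the per-unit normalization (the $\omega_0$ factors arising from $X=\omega_0 L$ and from the frequency-droop coefficient), the sign conventions linking $P,Q$ to $I_d,I_q$, and the proper use of the reduced incidence matrix $\nabla$ of the Kron-reduced network. Once these conventions are fixed consistently, the substitution is purely algebraic and produces the off-diagonal blocks $-I-X^{-1}\nabla N\nabla^T/(\tau s+1)$ and $I+X^{-1}\nabla M\nabla^T/[s(\tau s+1)]$ with the stated signs.
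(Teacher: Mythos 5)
Your proposal is correct and follows essentially the same route as the paper: the paper's proof is precisely the substitution $\pmb{\theta} = -\frac{M}{\tau s+1}\pmb{P} = -\frac{M}{\tau s+1}\nabla^T\pmb{\mathcal{I}_d}$ and $\pmb{V} = \frac{N}{\tau s+1}\nabla^T\pmb{\mathcal{I}_q}$ into \eqref{eq:Ohms_law}, which is exactly what you do. If anything, your version is more careful about the $1/s$ factor from integrating $\dot\theta=\omega$ and the $\omega_0$ normalization, which the paper's one-line proof glosses over.
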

\begin{proof}
 One could substitute $\pmb{\theta} = - \frac{M}{\tau s + 1} \pmb{P} = - \frac{M}{\tau s + 1} \pmb{I_d} = - \frac{M}{\tau s + 1} \nabla^T \pmb{\mathcal{I}_d} $ and $\pmb{V} = - \frac{N}{\tau s + 1} \pmb{Q} =  \frac{N}{\tau s + 1} \pmb{I_q} = \frac{N}{\tau s + 1} \nabla^T \pmb{\mathcal{I}_q}$ into \eqref{eq:Ohms_law} to obtain the desired representation.
\end{proof}

The eigenvalues $\lambda$ of the model \eqref{eq:dyn_model_Idq} (and \eqref{eq:dyn_model} as well) correspond to a non-trivial solution $\lambda, \pmb{\phi}_d, \pmb{\phi}_q$ for \eqref{eq:dyn_model_Idq} obtained by replacing $s$ with $\lambda$, $\pmb{\mathcal{I}_d}$ with $\pmb{\phi}_d$ and $\pmb{\mathcal{I}_q}$ with $\pmb{\phi}_d$. We call $\pmb{\phi} = \begin{pmatrix} \pmb{\phi}_d\\ \pmb{\phi}_q \end{pmatrix}$ the right eigenvector or just eigenvector for the polynomial eigenvalue problem \eqref{eq:dyn_model_Idq}. We also define the left eigenvector $\pmb{\psi} = \begin{pmatrix} \pmb{\psi}_d\\ \pmb{\psi}_q \end{pmatrix}$ as the right eigenvector of the (Hermitian) transposed matrix polynomial of \eqref{eq:dyn_model_Idq}.

Using the model representation in terms of currents $(\mathcal{I}_d, \mathcal{I}_q)$ given in Lemma \ref{lemma:G_representation} below, we express the $\lambda$ sensitivity as follows,

\begin{equation} \label{eq:sensitivity_G}
    \frac{\partial \lambda}{\partial \rho_j}\Bigr|_{\rho_{cr}} =
    - \frac{\begin{pmatrix}\pmb{\psi}_d^\dagger & \pmb{\psi}_q^\dagger\end{pmatrix} \frac{\partial G}{\partial \rho_j}|_{\rho_{cr}}  \begin{pmatrix}\pmb{\phi}_d \\ \pmb{\phi}_q\end{pmatrix}}{\begin{pmatrix}\pmb{\psi}_d^\dagger & \pmb{\psi}_q^\dagger\end{pmatrix} \frac{\partial G}{\partial \lambda}|_{\rho_{cr}} \begin{pmatrix}\pmb{\phi}_d \\ \pmb{\phi}_q\end{pmatrix}} \ .
\end{equation}

We notice that the denominator of \eqref{eq:sensitivity_G} is independent of $j$. Let us analyze the numerator of \eqref{eq:sensitivity_G}:
\begin{subequations}
    \begin{equation}
        \mathcal{N}_j \overset{\Delta}{=} \begin{pmatrix}\pmb{\psi}_d^\dagger & \pmb{\psi}_q^\dagger\end{pmatrix} \frac{\partial G}{\partial \rho_j}\Bigr|_{\rho_{cr}}  \begin{pmatrix}\pmb{\phi}_d \\ \pmb{\phi}_q\end{pmatrix}
    \end{equation}
    \begin{equation}
        \mathcal{N}_j = \lambda(\tau \lambda + 1) ([\psi_d]_j^*[\phi_d]_j + [\psi_q]_j^*[\phi_q]_j ) \ .
    \end{equation}
    
\end{subequations}

Further, by straightforward but cumbersome manipulations we derive:

\begin{equation}
    \mathcal{N}_j = - 2 X_j |[\phi_d]_j|^2 c   \frac{\lambda(\tau\lambda+1) + \tilde{\mu}}{\tau_0\lambda+\rho}  \ ,
\end{equation}
where $\tilde{\mu}$ is the eigenvalue of $X^{-1}\nabla M \nabla^T$, $c = \frac{\pmb{\psi}^\dagger \pmb{\psi}}{\pmb{\psi}_d^T X \pmb{\phi}_q + \pmb{\psi}_q^T X \pmb{\phi}_d} e^{j2\varphi_d}$. The phase $\varphi_d$ is the same for each element of $\pmb{\phi}_d$ as it the eigenvector of generalized Laplacian matrix and could be choosen to be real.
Now, we see that desired $\alpha_j = X_j |[\phi_d]_j|^2$. That concludes our proof.

\end{document}